\documentclass[aps,prb,floatfix,reprint,superscriptaddress]{revtex4-2}
\usepackage{amsmath,amssymb}
\usepackage{amsthm}
\usepackage{graphicx}
\usepackage{epsfig}
\usepackage{psfrag}
\usepackage[usenames]{color}
\usepackage{xcolor}

\usepackage{soul,color}
\usepackage{tabu}
\usepackage{multirow}
\usepackage{bbold}
\usepackage{mathtools}
\usepackage{siunitx}
\usepackage{verbatim}
\usepackage{bbm}
\usepackage{enumitem}
\usepackage{tcolorbox}

\usepackage{array}
\newcolumntype{C}[1]{>{\centering\arraybackslash}p{#1}}

\definecolor{DarkGreen}{RGB}{0,100,0}

\usepackage{hyperref}

\newtheorem{theorem}{Theorem}
\newtheorem{lemma}[theorem]{Lemma}
\newtheorem{definition}{Definition}

\newtheorem{remark}{Remark}

\begin{document}
\title{Microscopic geometric theory for gapped excitations in fractional topological fluids}

\author{Yuzhu Wang}
\email{yuzhu.wang@ntu.edu.sg}
\affiliation{School of Physical and Mathematical Sciences, Nanyang Technological University, 639798 Singapore}


\author{Bo Yang}
\email{yang.bo@ntu.edu.sg}
\affiliation{School of Physical and Mathematical Sciences, Nanyang Technological University, 639798 Singapore}
\date{\today}

\begin{abstract}
We propose a geometric description of all gapped excitations in fractional quantum Hall phases that reveals several fundamental understandings with experimental consequences. These include a duality between the Hilbert space of multiple gapped ``graviton-like" spin 2 excitations at $\nu=1/3$ Laughlin phase, and that of non-Abelian quasiholes of an irrational Haffnian conformal field theory. This leads us to construct microscopic wave functions for multiple higher-spin gapped neutral modes in the Laughlin phase. Carrying spin $s \ge 2$, they emerge from higher-order geometric deformations of the topological ground state, and live within the Gaffnian conformal Hilbert space asymptotically. We show that the full many-body Hilbert space of an FQH phase can be generated from superpositions of such geometric deformations, supporting a concrete geometric interpretation of all gapped excitations. We analyse the scattering between these higher spin modes and conjecture they can have long lifetime, and propose methods for their experimental detections.

\end{abstract}

\maketitle

\hypertarget{sec:i}{\textit{Introduction--}} Strongly interacting topological phases in two dimensions, including fractional quantum Hall (FQH) states in Landau levels and fractional quantum anomalous Hall (FQAH) states in Chern bands, host rich emergent phenomena driven by strong correlations \cite{Klitzing1980,Tsui1982,Park2023,Xu2023,Cai2023}. While the universal properties of the topological ground-state and gapless quasihole manifold are well understood in many settings, the structure and dynamics of gapped excitations remain far less explored, even for idealized model Hamiltonians. This is because analytic solutions are absent, and exponentially large Hilbert-space dimensions hamper numerical computations. Clarifying these excitations is not only a fundamental problem but also important for the experimental realization of topological phases in realistic quantum Hall systems or the more complicated lattice Chern bands: the robustness of these phases depends on the dynamics of these gapped excitations.

\begin{figure}[t]
\begin{center}
\includegraphics[width=\linewidth]{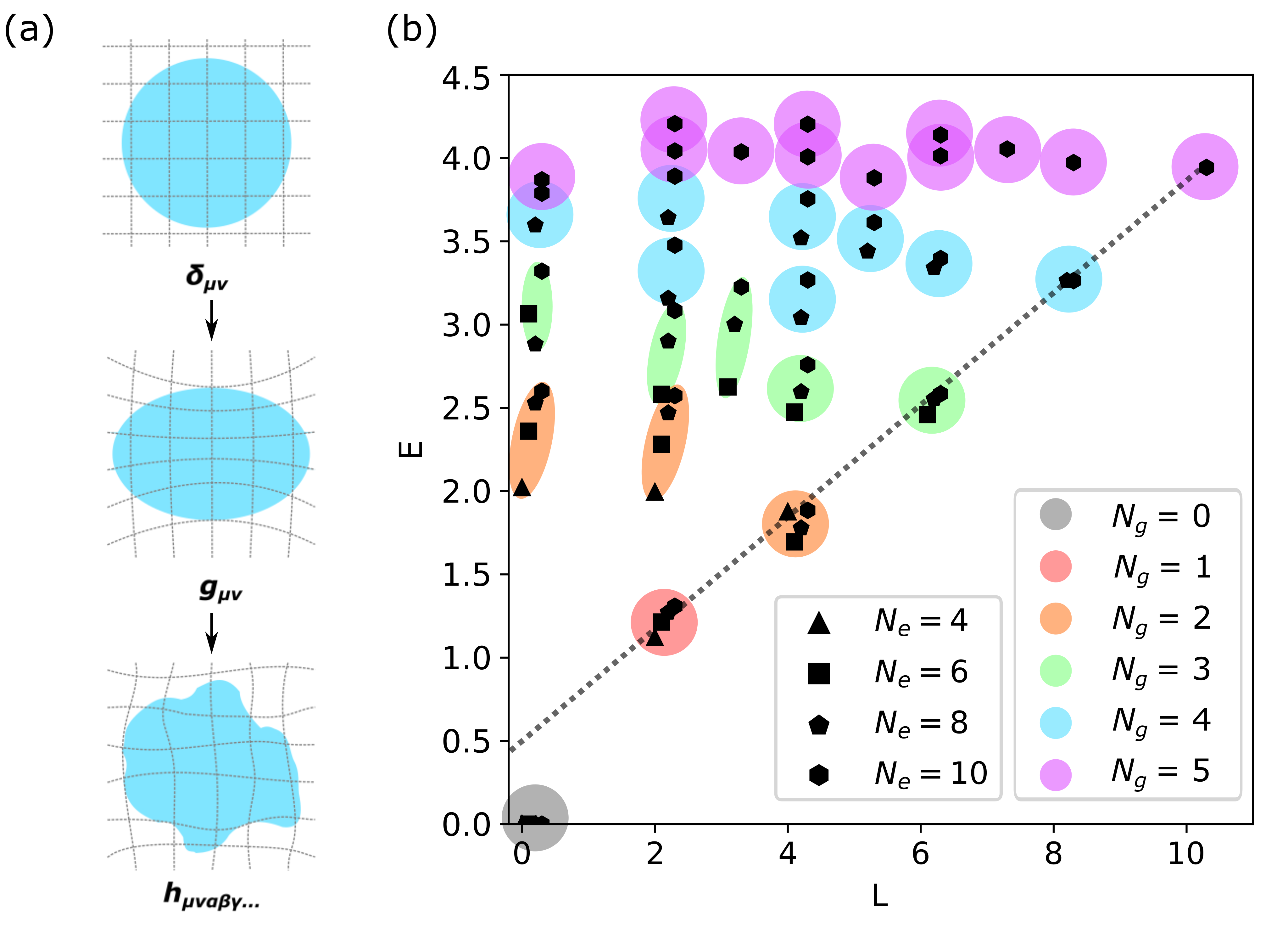}
\caption{\textbf{a. Schematic of ground-state deformations.} Starting from the uniform FQH ground state, the lowest-order area-preserving deformation gives spin-$2$ graviton modes associated with an anisotropic metric $g_{\mu\nu}$. Higher-spin modes arise from higher-order deformations, naturally described by higher-rank tensor structures that cannot be reduced to products of local metrics without redundancy. \textbf{b. Haffnian conformal Hilbert space (CHS) as a multiple Laughlin-graviton space.} $V_1$-resolved spectrum in the Haffnian CHS, defined as the nullspace of a three-body model Hamiltonian at $N_o=3N_e-2$. Markers denote system size, and colors label the number $N_g$ of spin-2 excitations. The highest-$L$ multi-graviton states exhibit linear dispersion (dashed line), consistent with quantized excitations. Exact Haffnian quasihole counting matches that of bosonic spin-2 gravitons, showing that the CHS of non-Abelian anyons can equivalently be spanned by geometric multi-Laughlin-graviton states.}
\label{fig1}
\end{center}
\end{figure}

A major step forward is the recognition that the long-wavelength magnetoroton branch can be viewed as a geometric excitation, the spin-$2$ graviton mode (GM), arising from quantum fluctuations of an emergent guiding-center metric \cite{Yang2012, Golkar2016, Liou2019, wang2021analytic, Yuzhu2023}. This viewpoint is complementary to composite-fermion (CF) and parton descriptions, where neutral excitations are typically interpreted as excitons \cite{PhysRevB.54.4873}. The geometric description also clarifies features not easily captured by the exciton picture, including the numerically found absence of an $L=1$ mode and the qualitative distinction between dipolar and quadrupolar neutral excitations \cite{yang20nematic}. However, the GM represents only a tiny part of the neutral excitations, and its experimentally observed long lifetime, despite being deeply embedded in the continuum of other gapped excitations, remains poorly understood \cite{Liang2024,wang2025dynamicslifetimegeometricexcitations}.

In this Letter, we propose a microscopic geometric theory that provides a unified description of gapped excitations in FQH fluids. We uncover a surprising duality between the conformal Hilbert space (CHS) spanned by non-Abelian Haffnian anyons and the Hilbert space spanned by multiple spin-$2$ gravitons, or the quantum fluctuations of the emergent metric from the Abelian topological ground state. This correspondence directly connects topological anyonic data from an irrational conformal field theory (CFT) to quantum geometric excitations, evoking a condensed-matter analogue of gauge--gravity duality \cite{hooft1993dimensional,susskind1995world}. We further construct microscopic wave functions for a hierarchy of higher-spin (HS) neutral modes for the Laughlin phase using the generators of quantum area-preserving diffeomorphisms, and show they live within the Gaffnian CHS asymptotically. We find that HS modes with small spins may also have long lifetimes due to scarce scattering channels, similar to those of spin-$2$ GMs, and thus propose experimental probes based on photons carrying orbital angular momentum. More generally, we conjecture that all gapped neutral excitations of the Laughlin phase come from a generic geometric deformation of the ground state, and they mediate the effective interaction between charged (gapless) quasihole and (gapped) quasielectrons of the topological phase.

\begin{figure}
\begin{center}
\includegraphics[width=\linewidth]{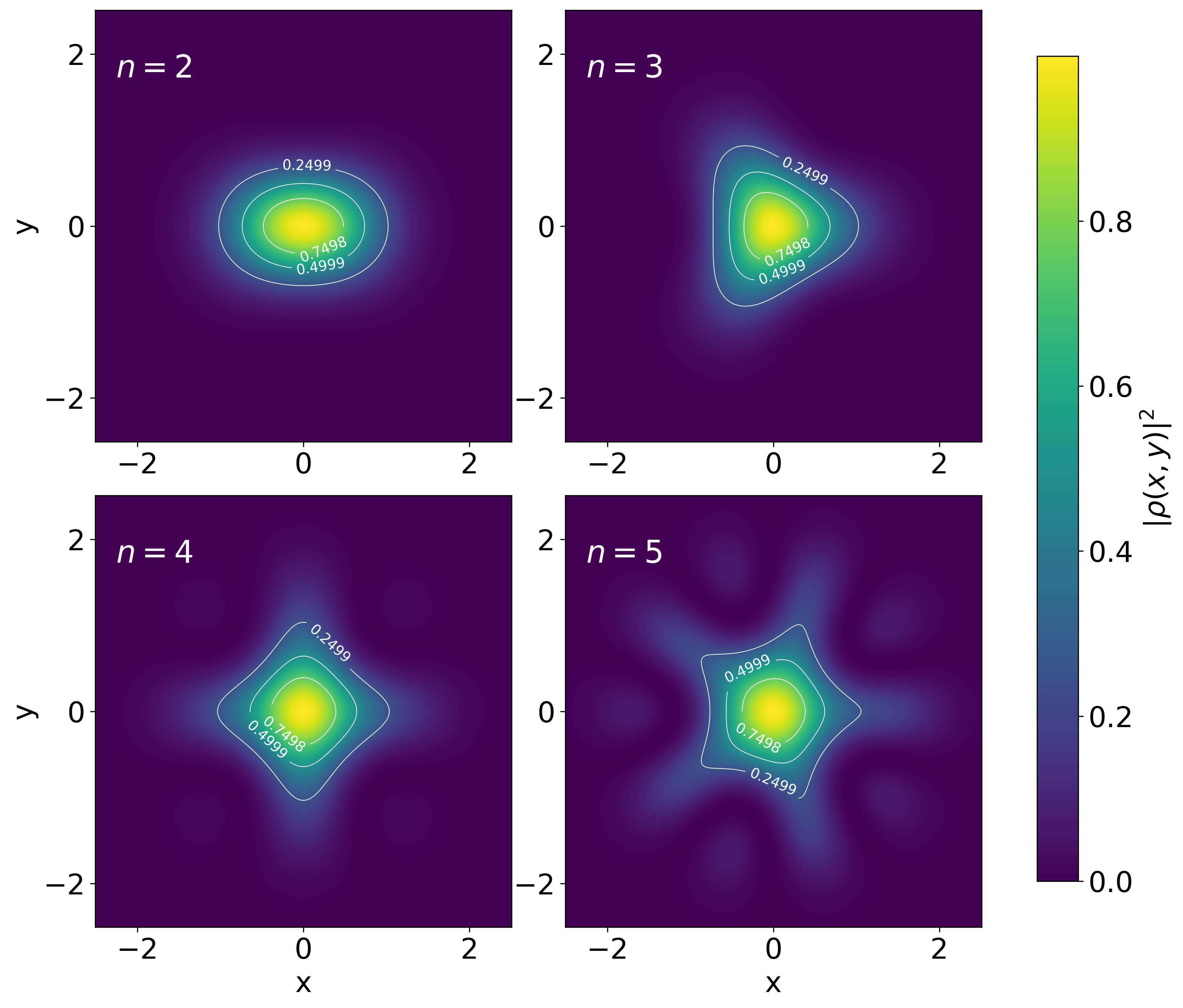}
\caption{\textbf{Density distribution of $\hat{U}_{s}(\alpha) |0\rangle$}. Here, we set the deformation strength $\alpha = 0.2$. Note that there is a $C_s$ symmetry in the wave packet acted by $\hat{U}_{s}$.}
\label{fig2}
\end{center}
\end{figure}

\hypertarget{sec:ii}{\textit{Duality between Laughlin gravitons and Haffnian CHS}--}
To uncover the nature of gapped neutral excitations in a Landau level (LL), it is useful to focus on the smallest nontrivial subspace in which they appear. For the Laughlin state, this role is played by a CHS defined as the nullspace of the Haffnian model Hamiltonian
$\hat{H}_{\mathrm{haf}}=\hat{V}^{(3bdy)}_{3}+\hat{V}^{(3bdy)}_{5}+\hat{V}^{(3bdy)}_{6}$ \cite{PhysRevB.61.10267, PhysRevB.75.075318}. Within this null space at the Laughlin filling $N_o=3N_e-2$, there are many zero-energy states of $\hat{H}_{\mathrm{haf}}$. Two of them are immediately identifiable: the uniform Laughlin ground state and the spin-$2$ Laughlin GM \cite{wang2021analytic}. With short-range interaction, all other states are gapped like the Laughlin GM, and it is natural to ask how to understand those states in the context of the Laughlin phase.

This CHS is known to host four Haffnian quasiholes, whose edge theory corresponds to an irrational $Z_2|Z_2$ orbifold CFT with infinitely many primary fields \cite{PhysRevB.81.115124, PhysRevB.79.045308}. Although its internal structure therefore appears a priori highly nontrivial, Green showed that the CHS can be decomposed into a set of spin-$2$ ``composite bosons'' \cite{green2001strongly}. A more physical interpretation becomes apparent when we resolve the CHS by short-range two-body interactions (e.g., the $V_1$ pseudopotential), as shown in Fig.~\ref{fig1}: the spectrum separates into distinct energy branches, and states near the spectral boundary exhibit an approximately linear dispersion. Each branch can be equivalently labeled by a fixed number $N_g$ of spin-$2$ bosons. Meanwhile, within each total angular momentum sector $L$, the counting of Haffnian quasihole states matches exactly the counting obtained by coupling identical spin-$2$ bosons and imposing bosonic symmetrization, as shown in Fig.~\ref{fig1}. This counting is fixed purely by rotational symmetry and statistics, independent of microscopic dynamics\footnote{Without loss of generality, the spectrum in Fig.~\ref{fig1} can be represented by a Hamiltonian proportional to the boson number $N_b$.}.

We can thus further clarify the physical nature of the previously identified ``composite bosons'' and the energetic structure underlying this correspondence. Surprisingly, we find that the $s>2$ states at given electron numbers have unit overlap with explicitly constructed states consisting of multiple Laughlin gravitons obtained from their root configurations, establishing them as genuine multi-graviton excitations \footnote{Further details about the Haffnian state and the explicit way of constructing multi-Laughlin-graviton modes are given in the Supplementary Materials.}. This identification further implies that the nearly linear dispersion observed in the spectrum arises from quantized \textit{geometric} excitations. Taken together, the mysterious ``composite bosons'' in the Haffnian CHS are precisely Laughlin gravitons.

These results reveal a nontrivial reorganization of the Haffnian CHS: although it is naturally defined in terms of non-Abelian quasiholes associated with an irrational CFT, it admits an exact and complete description in terms of quantized geometric excitations of the abelian Laughlin phase with no finite-size effect. Interestingly, the exact matching of counting implies that gravitons are global collective modes in the FQH phase \cite{yang2025quantum}, with no orbital degrees of freedom (i.e., only zero orbital momentum or angular momentum is allowed). Given that the energy of the graviton depends on the microscopic details and can be either finite or zero \cite{wang2021analytic}, the Hilbert space structure thus suggests gravitons are \emph{massless} quasiparticles in the non-relativistic limit in the context of field theory description. The energy of the graviton comes from the product of a ``divergent speed of light" and a vanishing momentum. Such Hilbert space algebra can also be established for the non-Abelian Moore-Read phase, which will be discussed elsewhere.

\begin{figure}[ht]
\begin{center}
\includegraphics[width=\linewidth]{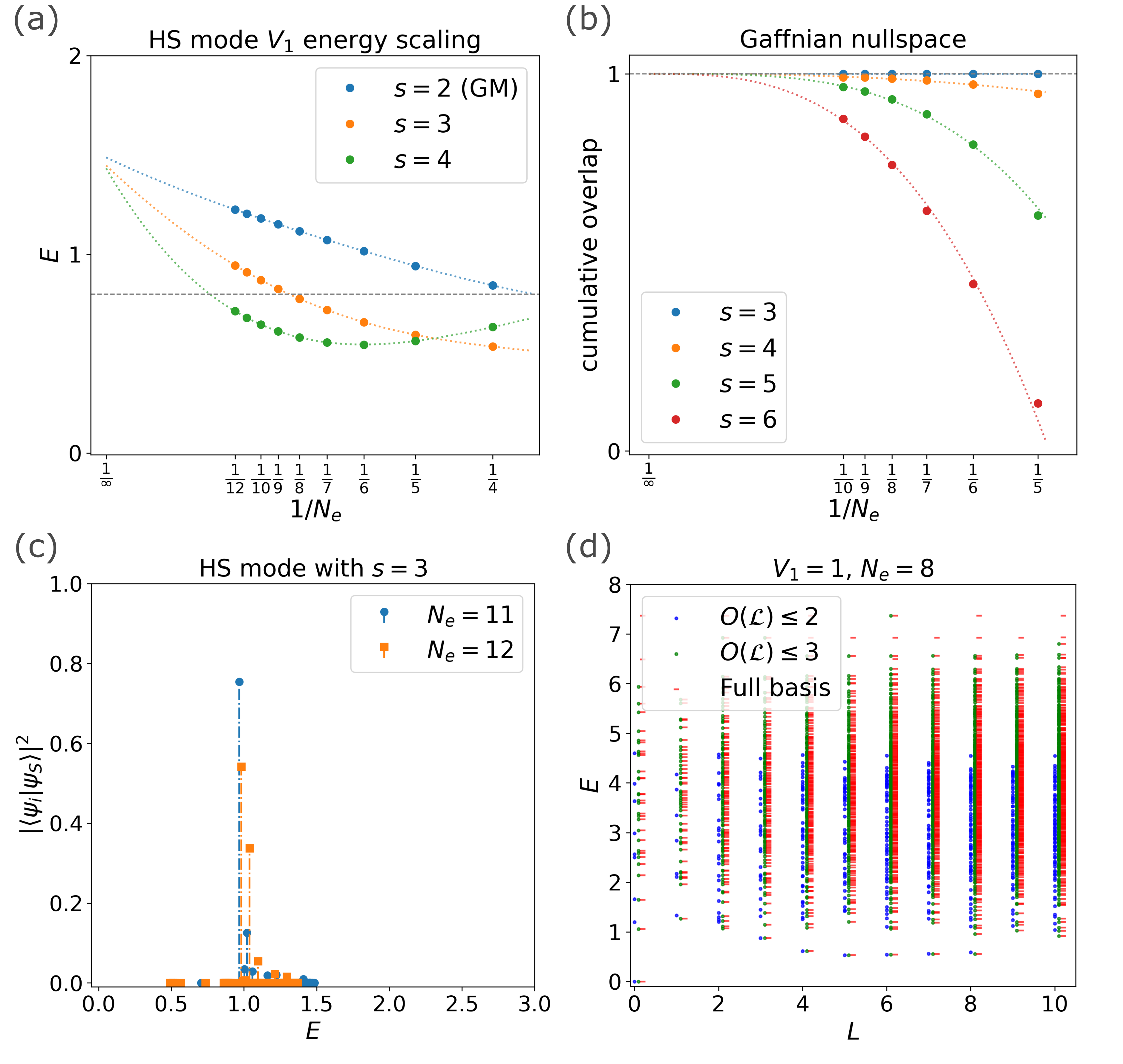}
\caption{\textbf{(a) Finite-size scaling of HS mode energies.} Their energies enter the continuum (above dashed line) and approach the GM energy. While the absolute values are non-universal and depend on system size, the thermodynamic extrapolation indicates convergence to the vicinity of the GM energy.
\textbf{(b) Finite-size scaling of HS modes' cumulative overlaps within Gaffnian nullspace.} All the Laughlin HS modes asymptotically live within this space.
\textbf{(c) Spectral function of HS mode with $s=3$.} Even after entering the continuum, the spectral peak is still sharp, indicating a long lifetime similar to the $s=2$ GM. 
\textbf{(d) Spectrum from deformation basis.} For $N_e=8$, including deformations up to third order, reproduces the full Hilbert space and thus recovers the complete spectrum of $\hat{V}_1$.
}
\label{fig3}
\end{center}
\end{figure}

\hypertarget{sec:iii}{\textit{Higher-spin modes}--}  The exact dualities above suggest that geometric excitations can provide a natural interpretation for gapped excitations beyond just a single ``graviton-like" quadrupole excitation. Since the spin-2 graviton comes from the leading-order ground-state geometric deformation (i.e., from a circle to an ellipse), we now construct higher-spin (HS) neutral modes as higher-order geometric deformations of the ground state (i.e., from a circle to an arbitrary shape), in analogy to the spin-2 graviton modes.

With a magnetic field, it is convenient to separate the coordinates into cyclotron $\tilde{R}$ and guiding-center $\bar{R}$ components, defined by 
$\hat{\tilde{R}}^{a}=\hat{r}^{a}+\epsilon^{ab}\hat{\pi}_{b}$ and $\hat{\bar{R}}^{a}=-\epsilon^{ab}\hat{\pi}_{b}$, where $a,b\in\{x,y\}$, $\hat{r}$ is the real-space position, and $\hat{\pi}$ the covariant momentum:
\begin{equation}
[\hat{r}^{a}_{j}, \hat{\pi}_{jb}] = i \delta^{a}_{b} \delta_{ij}, \quad [\hat{\pi}_{ia}, \hat{\pi}_{jb}] = i\cdot eB \cdot  \delta_{ij} \epsilon_{ab}.
\end{equation}
Here $i$ and $j$ denote the particle indices. For simplicity, we set the magnetic length  $\ell_B = \sqrt{\hbar/eB} = 1$. 
Introducing complex zweibeins $\bar{\omega}_a$ satisfying $\bar{\omega}_a \bar{\omega}^a = 0$, $\bar{\omega}_a^*\bar{\omega}^a=1$ (and similarly for cyclotron degrees of freedom $\tilde{\omega}_a$), we define the guiding-center ladder operators with unimodular Euclidean metrics and compatible complex structures:
\begin{equation}
\hat{b}=\bar{\omega}_{a}\hat{\bar{R}}^{a}, \quad
\hat{b}^{\dagger}=\bar{\omega}_{a}^{*}\hat{\bar{R}}^{a}, \quad [\hat{b}, \hat{b}^{\dagger}] = 1.
\end{equation}

The spin-$2$ graviton is generated by acting with $\sum_{i=1}^{N_e} (\hat{b}_i)^2$ on the topological ground state \cite{Yang2012} \footnote{One can get the graviton with the opposite chirality as $\left|\psi_{2}^{-}\right\rangle\sim \sum_{i=1}^{N_e} (\hat{b}^\dagger_i)^2 \left|\psi_0\right\rangle$. The same idea applies to the HS modes as well.}. This naturally motivates considering the family of operators $(\hat{b})^s$ with $s\ge 2$ as generators of neutral modes carrying higher spins. Acting on the ground state, they define a hierarchy of higher-spin (HS) modes,
\begin{equation}
|\psi_s\rangle \propto \sum_{i=1}^{N_e} (\hat b_i)^s \,|\psi_0\rangle , \quad s\ge 2.
\end{equation}
Their \textit{geometric} meaning is made explicit already at the single-particle level via the unitary transformation
\begin{equation}
\hat U_s(\alpha)=e^{\alpha (\hat{b}^\dagger)^{s} - \alpha^* \hat{b}^{s}}, \quad \alpha\in\mathbb C.
\end{equation}
Without loss of generality, we apply $\hat{U}_{s}(\alpha)$ with a small $\alpha \in \mathbb{R}$ to the 2D coherent state $\langle \boldsymbol{r} | 0 \rangle \propto e^{-r^2}$, with $\boldsymbol{r}=r e^{i\theta_r}$, giving the asymptotic density distribution
\begin{equation}
\begin{aligned}
\rho_{\alpha}^{(s)}(\boldsymbol{r}) & \equiv \langle \boldsymbol{r} |\hat{U}_{s}(\alpha) | 0 \rangle\\
& \sim e^{-|r|^2}\left[1+2 r^s \cos (s \cdot \theta_r) \cdot \alpha \right]+\mathcal{O}\left(\alpha^2\right).
\end{aligned}
\end{equation}
Note that normal ordering with Taylor expansion cannot be directly applied here due to the vanishing radius of convergence at $\alpha=0$ \cite{Braunstein87generalized}. Examples for several $s$ are shown in Fig.~\ref{fig2} where the wave packet $\hat{U}_{s}(\alpha)|0\rangle$ exhibits $C_s$ symmetry. For $s=2$, this deformation corresponds to a quadrupolar metric fluctuation, while for $s>2$ it represents an intrinsically higher-rank geometric distortion that cannot be reduced to metric degrees of freedom. This identifies $\hat b^s$ as generators of higher-spin geometric deformations of the ground state.

For finite systems on the sphere, the HS modes are exactly equivalent to the single-mode approximation (SMA), each of them also carrying an angular momentum $s$ \cite{Yang2012,yang2025quantum}. It is however important to distinguish between HS modes, which are in the long wavelength limit carrying no dipole moment, and other SMA states carrying a finite momentum (and thus a finite dipole moment). This subtle point can be most easily illustrated using spherical geometry. For an SMA state with fixed $s$, its linear momentum scales as $q\sim s/R \propto 1/\sqrt{N_e}\to 0$, where $R$ is the radius of the sphere. Thus, an SMA state is an HS mode only in the thermodynamic limit, when $s/N_e\to 0$. All other SMA states with finite $s/N_e$ in the thermodynamic limit are neutral collective excitations at \textit{finite} momentum \cite{girvin1985collective,Girvin1986}. Thus, in finite-size spectra, only the SMA modes with small $s$ compared to the system size approximate the HS modes well. For these states, we can also see their energy merge into the magnetoroton branch, as shown in Fig.~\ref{fig3}.

In the Abelian Laughlin phase at $\nu=1/3$, the finite-size scaling of $|\psi_{s}\rangle$ with $s\in[2,4]$ under the $V_1$ pseudopotential is shown in Fig.~\ref{fig3}a, where the energies $E_s = \langle \psi_s|\hat{V}_1 |\psi_s\rangle$ exhibit only weak dependence on $s$ as $N_e\to\infty$. Even more remarkably, all the Laughlin HS modes asymptotically reside within the Gaffnian CHS $\mathcal{H}_{\mathrm{gaf}}$, which is the nullspace of the Gaffnian model Hamiltonian
$\hat{H}_{\mathrm{gaf}}=\hat{V}^{(3bdy)}_{3}+\hat{V}^{(3bdy)}_{5}$, since their cumulative overlap with this null space increases with system size and approaches unity, as shown in Fig.~\ref{fig3}b. This behavior is highly nontrivial: as $N_e$ increases, $\mathcal{H}_{\mathrm{gaf}}$ constitutes a progressively smaller fraction of (it is actually of measure zero within) the full Hilbert space, so an increase of overlap onto it is \emph{not} expected. Here, by contrast, the geometric HS modes become \textit{more} sharply confined within $\mathcal{H}_{\mathrm{gaf}}$ as $N_e$ increases. This reflects an emergent thermodynamic property of the long-wavelength neutral sector, where $|\psi_s\rangle$ becomes increasingly well-approximated by the Gaffnian quasihole states.

\hypertarget{sec:v}{\textit{Lifetime and experimental relevance of HS modes} --}
Having established that HS modes are well-defined geometric excitations in the long-wavelength limit, we now examine their dynamical stability and experimental relevance. The HS modes constructed microscopically here provide their explicit realization of neutral quasiparticles carrying spin $s$ in effective field theory language \cite{nguyen2025spinfractionalquantumhall}.

A natural decay process is the scattering of a single HS mode into multiple HS modes. The number of such channels grows rapidly with spin, scaling as $n(s)\sim \frac{1}{4s\sqrt{3}}\,e^{\pi\sqrt{2s/3}}$. However, these processes are strongly suppressed for two reasons. First, there is a universal energy mismatch: the energies of multi–HS-mode states are generically higher than those of single HS modes. Second, the coupling matrix elements between single HS modes and multi–HS-mode states rapidly decrease with system size \footnote{Further details are given in the Supplementary Materials.}. For example, at $s=4$, $|\psi_{4}\rangle$ can only scatter to two spin-2 modes, and we find that their overlap vanishes in the thermodynamic limit while the energy of the two–spin-2 state remains significantly higher than that of $|\psi_{4}\rangle$ \footnote{Similar behavior is expected for $|\psi_{5}\rangle$, and also in other FQH phases, indicating that the stability mechanism does not rely on the statistics of the underlying quasiparticles.}.

Scattering into other excitations outside the HS sector may also occur, and we do not yet have a complete understanding of their coupling strengths. Nevertheless, numerical calculations indicate that such processes remain weak in practice. As shown in Fig.~\ref{fig3}c, even after $|\psi_{3}\rangle$ enters the continuum, its spectral function retains a pronounced and narrow peak, indicating a long lifetime and suggesting that HS modes should be experimentally accessible in much the same way as the GM.




These results motivate the question of how to access HS modes experimentally in the same long-wavelength, angular-momentum–resolved regime where they are sharply defined. Photons are ideal probes of GMs, since they carry spin $s=1$ and negligible linear momentum \cite{Liang2024}. However, accessing HS modes requires higher spin transfer. In principle, this can be achieved by multi-photon processes, where the net spin of several photons is transferred to the system. Yet such processes are inherently weak, as higher-order optical couplings can be strongly suppressed, and are difficult to control experimentally due to decoherence and phase-matching constraints in nonlinear optics \cite{Mollow68two, boyd2008nonlinear}.

A more promising route is to exploit the orbital angular momentum (OAM) of photons, which can be realized with Laguerre--Gaussian (LG) beams \cite{allen92orbital}. An LG mode is labeled by two quantum numbers $(p,l)$, where $p\geq 0$ is the radial index and $l\in\mathbb{Z}$ is the OAM (azimuthal index) \cite{plick15physical}. The transverse profile contains a helical phase factor $e^{i l \phi}$, so each photon carries quantized OAM $\hbar l$ in addition to its spin. This provides a direct and tunable mechanism for exciting HS modes, extending beyond the $s=2$ graviton: because the angular momentum is transferred through the azimuthal phase rather than linear momentum, the in-plane momentum imparted is still negligible, ensuring that the probed states remain in the long-wavelength sector. Moreover, our numerics show that the excitation energies of HS modes are of the same order as those of GMs, well within the range accessible by optical photons.

\hypertarget{sec:vi}{\textit{The Hilbert space of geometric excitations} --} We have focused on representative gapped excitations, such as the GM and HS modes. However, these modes do not exhaust all the gapped excitations of an FQH phase, nor all the possible geometric deformations of the ground state. One may further ask whether \textit{all} gapped excitations in a FQH phase admit a geometric description. This can indeed be achieved through a geometric construction based on the algebra of quantum deformations.

The guiding-center and cyclotron density operators, $\hat{\tilde{\rho}}_{\mathbf{q}}=\sum_{i=1}^{N_e} e^{i q_{a} \hat{\tilde{R}}_{i}^{a}}$ and $\hat{\bar{\rho}}_{\mathbf{q}}=\sum_{i=1}^{N_e} e^{i q_{a} \hat{\bar{R}}_{i}^{a}}$, obey the GMP algebra:
$\left[\hat{\tilde{\rho}}_{\mathbf{q}_{1}}, \hat{\tilde{\rho}}_{\mathbf{q}_{2}}\right]=  - 2 i \sin \left(\frac{1}{2} \epsilon^{ab} q_{1a} q_{2b}\right) \hat{\tilde{\rho}}_{\mathbf{q}_{1}+\mathbf{q}_{2}}$ and 
$\left[\hat{\bar{\rho}}_{\mathbf{q}_{1}}, \hat{\bar{\rho}}_{\mathbf{q}_{2}}\right]=  2 i \sin \left(\frac{1}{2} \epsilon^{ab} q_{1a} q_{2b}\right) \hat{\bar{\rho}}_{\mathbf{q}_{1}+\mathbf{q}_{2}}$. This is the unique closed algebra of projected density operators in $2$D Bloch bands \cite{steve25closed}. According to the SMA, the GMP modes are given by acting with $\hat{\bar{\rho}}_{\mathbf q}$ on a ground state $|\psi_0\rangle$ \cite{girvin1985collective, Girvin1986}.

Meanwhile, within a single LL, the kinetic Hamiltonian takes the form $\hat{H}_k \sim \hat{a}^{\dagger}\hat{a}+1/2$, so $[\hat b,\hat H]=[\hat b^\dagger,\hat H]=0$. This allows the construction of \textit{infinite} operators that act entirely within the guiding-center sector \cite{Cappelli1993, flohr1994infinite}:
\begin{equation}
\hat{\mathcal{L}}_{m,n}\equiv \sum_{i=1}^{N_e}(\hat b_i^\dagger)^m \hat b_i^n,\qquad m,n\ge 0,
\end{equation}
where $N_e$ is the number of electrons. These quasi-local operators obey the quantum algebra of area-preserving diffeomorphisms, known as the $W_\infty$ algebra, which is isomorphic to the GMP algebra \cite{Cappelli1993, flohr1994infinite}:
\begin{equation}
\begin{aligned}
[\hat{\mathcal{L}}_{m,n},\hat{\mathcal{L}}_{k,l}]
= &\sum_{s=1}^{\min(n,k)}\frac{n!k!}{(n-s)!(k-s)!s!}\,
\hat{\mathcal{L}}_{m+k-s,n+l-s} \\
&-(m\leftrightarrow k,\; n\leftrightarrow l),
\qquad m,n,k,l\in\mathbb{N}.
\end{aligned}
\label{W_infty}
\end{equation}
Correspondingly, $\hat{\bar{\rho}}_{\mathbf q}$ serves as a generating function for the operators $\hat{\mathcal{L}}_{m,n}$ through its expansion in $\mathbf q$. The SMA is therefore equivalent to forming superpositions of guiding-center deformations \cite{Haldane2009, Haldane2011}. This provides a formal justification for the geometric interpretation of GM and HS modes introduced earlier.

Deformations in QH systems arise by successive action of $\hat{\mathcal{L}}_{m,n}$ operators on the ground state $|\psi_0\rangle$:
\begin{equation}
|\mathcal{W}^{(k)}_{\{m_i\},\{n_i\}} \rangle \equiv \tilde{\mathcal{L}}^{(k)}_{\{(m_i,n_i)\}} |\psi_0 \rangle = \prod_{i=1}^{N} \hat{\mathcal{L}}_{m_i,n_i} |\psi_0 \rangle,
\label{Wn_states}
\end{equation}
where $k \le N$ denotes the deformation order, set by the number of operators distinct from a single $\hat{b}^\dagger$ or $\hat{b}$. The HS modes $| \psi_s \rangle$ can be identified as $|\mathcal{W}^{(1)}_{\{s\},\{0\}} \rangle$, which can be proved to be highest-weight states on the sphere. In general, $\tilde{\mathcal{L}}^{(k)}_{\{(m_i,n_i)\}}$ neither produces states with good angular momentum quantum numbers nor preserves the $W_\infty$ algebra, but instead belongs to its universal enveloping algebra $\mathcal{U}(W_\infty)$. 

We refer to the modes with $k>1$ in Eq.~\ref{Wn_states} as $k$-th order \textit{general deformations}, and we can define a Hilbert space $\mathcal{H}$ spanned by all the $|\mathcal{W}^{(k)}_{\{m_i\},\{n_i\}} \rangle$ states. One can prove that $\mathcal{H}$ is exactly the Hilbert space consisting of all the gapped excitations. This is because the linear map between the operator sets $\{l_i\}$ and $\{c_j\}$, with $l_i=\hat{\mathcal{L}}_{n+i,n}$ for $n\in[0,2S]$ and $c_j=\hat{c}_{\sigma +j}^\dagger \hat{c}_{\sigma}$ for $\sigma \in[-S,S]$, is represented by a matrix $M$ with $M_{ij}=0$ for $i<j$ and $\det M\neq 0$, hence invertible. So the deformation basis $\{|\mathcal{W}^{(k)}_{\{(m_i,n_i)\}}\rangle\}$ and the Fock basis ($L_z$ eigenbasis) are \textit{isomorphic}. For finite $N_e$, diagonalizing $\hat{H}$ in $\{|\mathcal{W}^{(k)}_{\{(m_i,n_i)\}}\rangle: k \le k_{\text{max}}\}$ resolves excitations by deformation order and yields the full spectrum once $N_{\text{max}}$ exceeds a threshold. We can use the Laughlin phase at the filling $1/3$ as an example to illustrate this argument: Fig.~\ref{fig3}d shows that one can generate the full $V_1$ spectrum with sufficiently large $k$ in the deformation basis.  As $N_e\to\infty$, any subspace spanned by deformations with a finite order $k$ becomes measure zero in the full Hilbert space $\mathcal{H}$, since $\forall k \in\mathbb{Z}_{+}$, $\nu = N_e/N_o \in (0,1]$, $\lim_{N_e\rightarrow\infty} N_o^{2k}/\binom{N_o}{N_e}= 0$.


\begin{acknowledgments}
We would like to thank Duncan Haldane, Steve Simon, Yoshiki Fukusumi, Dam Thanh Son, Kun Yang, Dung Xuan Nguyen, Giandomenico Palumbo, Patricio Salgado-Rebolledo, Eric Bergshoeff, and Ha Quang Trung for the fruitful discussions. This work is supported by the NTU grant for the National Research Foundation, Singapore, under the NRF fellowship award (NRF-NRFF12-2020-005), and Singapore Ministry of Education (MOE) Academic Research Fund Tier 3 Grant (No. MOE-MOET32023-0003) “Quantum Geometric Advantage.”
\end{acknowledgments}

\bibliography{HS_references}{}

\newpage
\appendix
\counterwithin{figure}{section}

\setcounter{table}{0}
\onecolumngrid

\begin{center}
    {\large \textbf{Supplementary Materials for ``Microscopic geometric theory of gapped excitations in fractional quantum Hall fluids''}}  
\end{center}

This Supplementary Material provides additional details and technical derivations supporting the results presented in the main text. 
Section~\ref{sm:haffnian} reviews the microscopic and conformal-field-theory descriptions of the Haffnian state, including its filling factor, ground-state counting, wave function, and pairing interpretation. 
Section~\ref{sm:multigraviton_construct} describes the construction of neutral excitations from root configurations and squeezing rules, and explains how higher-spin and multi-graviton modes arise microscopically in model quantum Hall states. 
Section~\ref{sm:counting_symNs} analyzes the counting of highest-weight states of $N$ identical spin-$s$ bosons, which provides the algebraic basis for understanding the counting of multi-graviton Hilbert spaces. 
Section~\ref{sm:higher_spin_deform} discusses single-particle deformations that generate higher-spin structures in the lowest Landau level. 
Section~\ref{sm:hs_lifetime} presents additional numerical results for the lifetime and scattering of higher-spin modes.
Section~\ref{sm:higher_spin_algebra} develops the algebraic structure of the higher-spin generators and their relation to the guiding-center $W_\infty$ algebra on the sphere. 
Finally, Section~\ref{sm:all_gapped_are_geometric} proves the geometric interpretation of the neutral excitations discussed in the main text.

\section{The Haffnian state: microscopic and conformal field theoretical construction\label{sm:haffnian}}

The Haffnian state plays a central role in the construction of the conformal Hilbert space discussed in the main text. In this section, we briefly review its microscopic and conformal field theoretical descriptions. We summarize the basic properties of the Haffnian state, including its filling factor, model wave function, ground-state degeneracy, and model Hamiltonian, and discuss the associated $d$-wave pairing picture. We then outline its representation within conformal field theory, which shows an irrational nature and helps with understanding the nontrivial counting of ground states and quasihole states. 

\subsection{Microscopic approach}

The bosonic Haffnian is a paired quantum Hall state at filling factor $\nu=\frac{1}{2}$, and is most naturally interpreted as a $d$-wave paired state of spinless composite bosons \cite{Green2001,PhysRevB.61.10267}. On a genus-$0$ manifold, the densest ground state occurs at orbital number \cite{Green2001,Hermanns2011}
\begin{equation}
N_o = 2N_e - 3,
\end{equation}

A standard form of the Haffnian wave function is
\begin{equation}
\Psi_{\mathrm{Hf}}(\{z_i\})
=
\mathrm{Hf} \left(\frac{1}{(z_i-z_j)^2}\right)
\prod_{i<j}(z_i-z_j)^2,
\label{eq:Haffnian-wavefunction}
\end{equation}
where the Haffnian of a symmetric matrix $M_{ij}$ is defined by
\begin{equation}
\mathrm{Hf}(M_{ij})
=
\sum_{\text{pairings}}
M_{i_1 j_1} M_{i_2 j_2} \cdots M_{i_{N_e/2} j_{N_e/2}}.
\end{equation}

To see the $d$-wave pairing interpretation of the Haffnian \cite{Green2001,PhysRevB.61.10267}, one should recall that in momentum space, the $d$-wave pairing potential transforms as
\begin{equation}
\Delta(\mathbf{k}) \propto (k_x - i k_y)^2,
\end{equation}
so the pair angular momentum is $\ell=-2$ \cite{PhysRevB.61.10267}. In real space, this corresponds to the pair wave function $g(z)\sim 1/z^2$, precisely the structure appearing in the Haffnian prefactor of Eq.~\ref{eq:Haffnian-wavefunction}. In this sense, the Haffnian is the bosonic $d$-wave analog of the $p$-wave Moore-Read Pfaffian. However, unlike the Pfaffian, the Haffnian does not represent a stable weak-pairing topological phase. Rather, the analysis of paired states in two dimensions suggests that the corresponding $d$-wave problem is naturally associated with criticality \cite{PhysRevB.61.10267}. In Green's discussion, the Haffnian is therefore better viewed as a critical paired state, plausibly lying at a transition rather than deep inside an incompressible phase \cite{Green2001}. We will return to this point below. Furthermore, a fermionic wave function can be given by multiplying Eq.~\ref{eq:Haffnian-wavefunction} by a Jastrow factor, and the filling factor becomes $\nu=\frac{2}{6}$ (to distinguish it from the Laughlin state at effectively the same filling). Its densest ground state occurs at the orbital number:
\begin{equation}
N_o = 3N_e - 4.
\end{equation}
So it has a different topological shift from the Laughlin state. 

The Haffnian state is the unique densest zero-energy ground state of a three-body model Hamiltonian \cite{Green2001, PhysRevB.75.075318}, which may be written as:
\begin{equation}
\hat{H}_{\mathrm{Hf}}
=
c_0 \hat{V}^{(3bdy)}_0
+
c_2 \hat{V}^{(3bdy)}_2
+
c_3 \hat{V}^{(3bdy)}_3, 
\label{eq:Haffnian-parent_bosonic}
\end{equation}
where the coefficients $c_0$, $c_2$ and $c_3$ are real numbers, and the nullspace of this model Hamiltonian defines the Haffnian conformal Hilbert space. $\hat{V}^{(3bdy)}_L$ denotes the three-body pseudopotentials equivalent to $\hat{P} \left(\frac{3N_\phi}{2}-L\right)$, with $N_\phi$  the flux number and $\hat{P}(L)$ projecting three particles onto total angular momentum $L$ \cite{Green2001}. For the fermionic Haffnian state, the model Hamiltonian becomes:
\begin{equation}
\hat{H}_{\mathrm{Hf}}
=
c_3 \hat{V}^{(3bdy)}_3
+
c_5 \hat{V}^{(3bdy)}_5
+
c_6 \hat{V}^{(3bdy)}_6.
\label{eq:Haffnian-parent_fermionic}
\end{equation}
The Haffnian wave function has no support in these forbidden triplet sectors, which is why it is annihilated by $H_{\mathrm{Hf}}$ and emerges as the densest zero mode  \cite{Green2001}. In this sense, it seems similar to other model fractional quantum Hall (FQH) states with exact model Hamiltonians. 

However, its physical nature from other well-known states (such as the Laughlin states) is rather different, and this will be exposed when we look at the ground state degeneracy $D_{\text{haf}}$ of the Haffnian state on a torus: for even numbers of bosons or fermions, $D_{\text{haf}} = N_e + 8$ or $N_e/2 + 4$ (without the three-fold center-of-mass degeneracy), which is \textit{increasing} with the system size. This implies that the Haffnian state is critical or gapless in the thermodynamic limit, and also suggests that the Haffnian quasihole counting will be nontrivial and increasing with the system size as well. In fact, such counting can be calculated either from exact diagonalizations with the model Hamiltonian, or from squeezing root configurations with restrictions if we choose specific geometries as introduced below.

The explicit polynomial form of the Haffnian wave function on \textit{a disk or a sphere} is consistent with the generalized exclusion rule based on root configurations \cite{Green2001,Hermanns2011}, although it cannot be represented by a Jack polynomial. The root configuration for the bosonic Haffnian ground state is
\begin{equation}
2\,0\,0\,0\,2\,0\,0\,0\,2\,\ldots\,2\,0\,0\,0\,2,
\end{equation}
which obeys the generalized exclusion rule that there are at most two particles in any four consecutive orbitals, with the \textit{additional rule} that patterns like $0\,2\,0\,0\,1\,0$ are also allowed for providing the correct quasihole counting \cite{Hermanns2011}. For the fermionic case, the root configuration reads:
\begin{equation}
1\,1\,0\,0\,0\,0\,1\,1\,0\,0\,0\,0\,\ldots\,1\,1\,0\,0\,0\,0\,1\,1.
\end{equation}
The generalized exclusion rules give a compact way to generate the Haffnian zero-mode space and quasihole counting: the quasihole sector of the Haffnian can be generated by squeezing (i.e., bringing a pair of electrons towards each other and thus decreasing their relative angular momentum by $2$, as we will introduce in more detail in the next section) from the corresponding root configurations, with the multiplicity in each angular-momentum sector determined by the highest-weight condition. This construction reproduces the zero-mode counting of the model Hamiltonian $\hat{H}_{\mathrm{Hf}}$ and leads to the quasihole counting formula on a disk or a sphere:
\begin{equation}
\sum_b
\binom{b-2+N_{\mathrm{qh}}/2}{b}
\binom{(N_e-b)/2+N_{\mathrm{qh}}}{N_{\mathrm{qh}}}.
\end{equation}
Note that the summation over $b$ is bounded only by $N_e$ rather than solely by the quasihole number $N_{\mathrm{qh}}$. In particular, such quasihole degeneracy \textit{cannot} be interpreted in terms of a finite set of anyon types \cite{Hermanns2011,PhysRevB.84.115121}. This is one of the clearest signatures that the Haffnian is an \textit{irrational} state rather than a stable rational topological phase.

\subsection{Conformal field theory description}

The Haffnian state also admits a representation in terms of chiral conformal field theory correlators. In this formulation, the neutral sector is generated by a chiral bosonic current
\begin{equation}
J(z)=\partial\varphi(z),
\qquad h_J=1,
\end{equation}
whose multi-point correlators reproduce the Haffnian pairing structure,
\begin{equation}
\langle J(z_1)\cdots J(z_{N_e})\rangle
=
\mathrm{Hf} \left(\frac{1}{(z_i-z_j)^2}\right).
\end{equation}
Combining this neutral correlator with the standard vertex operators of the charge sector yields the Haffnian ground-state wave function. In this sense, the analytic structure of the Haffnian can be interpreted as arising from a conformal theory consisting of a $U(1)$ charge mode together with a neutral bosonic current sector.

Quasihole excitations are described by twist operators that modify the boundary conditions of the neutral current sector. Denoting such an operator by $\sigma$, its operator product expansion with the current takes the form \cite{PhysRevB.79.045308}
\begin{equation}
J(z)\,\sigma(0)\sim z^{-1/2}\,\tau(0)+\cdots ,
\end{equation}
which implies that the twist operator carries conformal weight $h_\sigma=\frac{1}{8} $. This structure identifies the neutral sector with a $\mathbb{Z}_2$ orbifold of a $c=1$ boson. As in other orbifold conformal field theories, twist operators introduce branch cuts for the bosonic field and therefore generate multiple conformal blocks when several quasiholes are present, leading to non-Abelian quasiparticle degeneracies.

More detailed information about the quasiparticle content can be obtained from the vertex-algebra construction of the $Z_2 \times Z_2$ state. Within this framework, quasiparticles are classified by the pattern-of-zeros data together with their electric charge and scaling dimension. One finds three distinct families of quasiparticles \cite{PhysRevB.84.115121}: the first family corresponds to the trivial sector and represents the vacuum modulo electrons. The second family contains nontrivial twist-field excitations with scaling dimensions analogous to those appearing in an Ising-type theory, although their electric charge differs from that of the Moore–Read quasihole. The third family is more exotic: its quasiparticle operators depend on a \textit{continuous} parameter that enters the operator-product coefficients of the vertex algebra. Because this parameter is not fixed by the pattern of zeros, the theory contains an \textit{infinite} set of distinct quasiparticle operators belonging to this family.

The fusion structure reflects this enlarged quasiparticle spectrum. While the fusion rules for the first two families resemble those of familiar non-Abelian theories, the presence of the continuously parameterized third family implies that the fusion algebra does not close on a finite set of quasiparticle types. Instead, the vertex algebra generates infinitely many quasiparticle sectors. This infinite fusion structure is the conformal-field-theory manifestation of the \textit{irrational} character of the Haffnian state: unlike rational conformal field theories describing conventional non-Abelian quantum Hall phases, the quasiparticle content of the Haffnian does not truncate to a finite set of primary fields. Consequently, the associated quasihole degeneracies and torus ground-state degeneracies grow with system size rather than remaining finite \cite{PhysRevB.79.045308, PhysRevB.81.115124, PhysRevB.84.115121}.

It is important to emphasize that this pathology is not due to nonunitarity. In contrast to the Gaffnian state described by a nonunitary conformal theory, the Haffnian neutral sector is irrational but \textit{unitary}. Thus, the failure of the Haffnian to describe a stable topological phase arises from the presence of an infinite-dimensional fusion structure rather than from states with zero or negative norm in the representation space \cite{PhysRevB.103.115102}.

\section{Construction of multi-graviton states based on root configurations\label{sm:multigraviton_construct}}

In this section, we describe a systematic method for constructing neutral excitation wave functions using the root-configuration and squeezing framework. This approach provides an efficient way to generate model wave functions for collective modes such as the magnetoroton modes, including the multi-Laughlin-graviton states discussed in the main text.

On a disk or a sphere, many FQH model wave functions are naturally organized using root configurations together with the squeezing principle \cite{bernevig2008model}. A \textit{root configuration} specifies a representative occupation pattern of Landau-level orbitals, from which the full many-body Hilbert space sector is generated by \textit{squeezing} operations that preserve particle number and total angular momentum. This construction encodes the generalized exclusion rule (or generalized Pauli principle) characterizing the underlying topological phase. For example, the Laughlin state at filling $\nu=1/3$ obeys the rule that no more than one electron can occupy any three consecutive orbitals, leading to the root configuration
\begin{equation}
1\,0\,0\,1\,0\,0\,1\,0\,0\,1\,0\,0\,1\cdots .
\end{equation}
All basis states contributing to the model wave function can be obtained from this root configuration through squeezing operations that move particles closer together while preserving their center of mass.

The same framework can be used to describe neutral excitations. In particular, the magnetoroton mode of the Laughlin and the Moore–Read states can be generated by modifying the root configuration locally near one pole of the sphere while leaving the bulk structure unchanged. These root configurations define a restricted Hilbert space whose ground state under the corresponding model Hamiltonian reproduces the neutral collective excitation.

Building on this idea, one can construct a family of root configurations that generate magnetoroton modes as well as other neutral modes. For the Laughlin state at filling $\nu=1/3$, representative configurations are
\begin{equation}
\begin{aligned}
1\,1&\,0\,0\,0\,0\,1\,0\,0\,1\,0\,0\,1\,0\,0\,1\,0\,0\,1 \cdots, \quad L=2 \quad \text{[Magnetoroton]} \\
1\,1&\,0\,0\,0\,1\,0\,0\,0\,1\,0\,0\,1\,0\,0\,1\,0\,0\,1 \cdots, \quad L=3 \quad \text{[Magnetoroton]} \\
1\,1&\,0\,0\,0\,1\,0\,0\,1\,0\,0\,0\,1\,0\,0\,1\,0\,0\,1 \cdots, \quad L=4 \quad \text{[Magnetoroton]} \\
1\,1&\,0\,0\,1\,0\,0\,0\,0\,1\,0\,0\,1\,0\,0\,1\,0\,0\,1 \cdots, \quad L=4 \\
1\,1&\,0\,0\,0\,1\,0\,0\,1\,0\,0\,1\,0\,0\,0\,1\,0\,0\,1 \cdots, \quad L=5 \quad \text{[Magnetoroton]} \\
1\,1&\,0\,0\,1\,0\,0\,0\,1\,0\,0\,0\,1\,0\,0\,1\,0\,0\,1 \cdots, \quad L=5 \\
1\,1&\,0\,0\,0\,1\,0\,0\,1\,0\,0\,1\,0\,0\,1\,0\,0\,1\,0 \cdots, \quad L=6 \quad \text{[Magnetoroton]} \\
1\,1&\,0\,0\,1\,0\,0\,0\,1\,0\,0\,1\,0\,0\,1\,0\,0\,0\,1 \cdots, \quad L=6 \\
1\,1&\,0\,0\,1\,0\,0\,1\,0\,0\,0\,1\,0\,0\,0\,1\,0\,0\,1 \cdots, \quad L=6 \\
& \quad\qquad\qquad \vdots
\end{aligned}
\end{equation}
A characteristic feature of these configurations is the presence of two consecutive electrons localized near the north pole, while the remaining orbitals continue to obey the generalized Pauli principle of the Laughlin ground state. This structure can be interpreted as a pair of quasi-electrons forming a localized neutral excitation on top of the incompressible background.

To remove redundancy associated with the spherical geometry, we impose the highest-weight condition $L=L_z$, selecting only the highest-weight representative of each angular-momentum multiplet. This substantially reduces the dimensionality of the Hilbert space without eliminating any physically distinct states.

Our construction proceeds by temporarily removing the two electrons localized at the north pole and solving the model Hamiltonian in the reduced Hilbert space. Specifically, we annihilate the two orbitals with the largest angular momenta with the operator $\hat c_Q \hat c_{Q-1}$, where $Q$ denotes the highest orbital index (equivalently, the largest $L_z$ quantum number on the sphere). We then determine the ground state of the Haldane pseudopotential Hamiltonian $\hat V_1$ within this reduced Hilbert space. For the magnetoroton modes' root configurations, a uniquely defined ground state always exists, and restoring the two electrons yields the model wave function for the higher-spin excitation. The resulting state satisfies
\begin{equation}
\hat V_1 \hat c_Q \hat c_{Q-1} |\psi_S\rangle =0,
\qquad
\hat L_+ |\psi_S\rangle =0,
\end{equation}
which ensures that it is both a zero mode of the model Hamiltonian and a highest-weight state of total angular momentum. For other modes, one can project out the magnetoroton modes and other excitations with the same to get a uniquely defined wave function. The dispersion of these modes is shown in Fig.~\ref{supp_fig1} More details on this procedure can be found in Ref\cite{Yang2012, yang14nature}. One can also use such procedures for non-Abelian states, which we will leave for future work to address.

\begin{figure}[t]
\begin{center}
\includegraphics[width=0.6\linewidth]{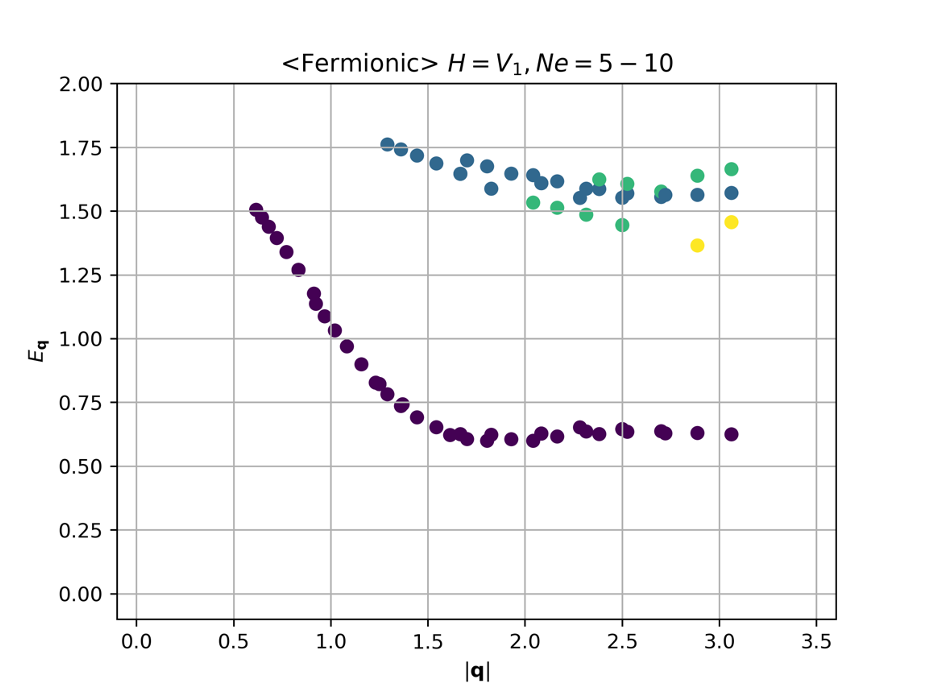}
\caption{\textbf{Dispersion of neutral gapped excitations with respect to the $V_1$ pseudopotential.} 
The magnetoroton branch is clearly resolved, and its overlap with the exact diagonalization eigenstates is unity. The other neutral excitations, shown in lighter colors, are labeled according to their squeezing order: modes with lighter colors have root configurations that can be obtained by squeezing those shown in darker colors. These states lie at higher energies and are therefore deeply embedded in the continuum.}
\label{supp_fig1}
\end{center}
\end{figure}

The same procedure can be extended to construct multi-Laughlin-graviton modes. By observing the representative root configuration for a Laughlin graviton on the north pole is $1\,1\,0\,0\,0\,0$. We can construct the root configurations that contain multiple such localized neutral structures near the pole while the rest of the system retains the Laughlin-type root pattern. The root configurations of multi-Laughlin-graviton states can thus be written as:
\begin{equation}
\begin{aligned}
1\,1&\,0\,0\,0\,0\,1\,1\,0\,0\,0\,0\,1\,0\,0\,1\,0\,0\,1\,0\,0\,1\,0\,0\,1 \cdots, \quad L=4 \quad \text{[2-graviton]}\\
1\,1&\,0\,0\,0\,0\,1\,1\,0\,0\,0\,0\,1\,1\,0\,0\,0\,0\,1\,0\,0\,1\,0\,0\,1 \cdots, \quad L=6 \quad \text{[3-graviton]},\\
1\,1&\,0\,0\,0\,0\,1\,1\,0\,0\,0\,0\,1\,1\,0\,0\,0\,0\,1\,1\,0\,0\,0\,0\,1 \cdots, \quad L=8 \quad \text{[4-graviton]},\\
\end{aligned}
\end{equation}
These roots can still lead to a single, uniquely defined state in each case. Applying the reduced-Hilbert-space procedure described above with the condition $\hat V_1 \hat{c}_Q \hat{c}_{Q-1} \hat{c}_{Q-6} \hat{c}_{Q-7}|\psi_S\rangle =0$, therefore, yields a well-defined microscopic wave function for the corresponding multi-graviton excitation. One can also put the root component $1\,1\,0\,0\,0\,0$ symmetrically on both poles to generate $L=0$ states. The important message is that we surprisingly found that \textit{the $L=4$ double-graviton mode has unity overlap with the Haffnian quasihole states from exact diagonalization}. This confirms that the duality between the two Hilbert spaces is not just based on counting, but also on the wave functions.

An important implication of this construction is that quantum Hall gravitons are not free particles, since the energy cost of a double graviton mode is lower than the sum of two single graviton modes: for example, with $N_e=9$ and Haldane pseudopotential $V_1$, a single Laughlin graviton mode has the energy $E_{1G}=1.295$, while the double graviton mode has the energy $E_{_2G} = 2.304$. The admissible root configurations impose strong constraints on the Hilbert space, and the resulting multi-graviton states cannot be interpreted simply as products of independent single-graviton modes. Instead, they form correlated collective excitations whose structure is encoded directly in the root configurations.

Finally, we note that the uniqueness of the resulting wave function requires a minimum system size. Below a certain threshold, different root configurations may generate overlapping Hilbert spaces, and the corresponding zero modes are no longer uniquely defined. The minimal system size required for the double-graviton mode to be uniquely determined is $N_e=9$, and this size minimum increases with the number of gravitons.

\section{Counting highest-weight states of identical spinful bosons\label{sm:counting_symNs}}

In this section, we discuss how to generate the counting for the highest-weight states of identical spinful bosons, which supports our calculation of the dimension of the Hilbert space spanned by a collection of spin-$2$ Laughlin graviton modes, and also works for more general cases. 

We consider $N$ identical bosons carrying an internal $\mathrm{SU}(2)$ spin-$s$ degree of freedom. The one-particle Hilbert space is the irreducible representation $V_s$ of dimension $2s+1$, whose weights (magnetic quantum numbers) are $m=-s,-s+1,\dots,s$. The $N$-boson spin Hilbert space is the totally symmetric tensor power
\begin{equation}
\mathcal{H}_{N,s} \equiv \mathrm{Sym}^N(V_s).
\end{equation}
Our goal is to decompose $\mathcal{H}_{N,s}$ into irreducible $\mathrm{SU}(2)$ representations,
\begin{equation}
\mathrm{Sym}^N(V_s) \cong \bigoplus_{J=0}^{Ns} m_J\, V_J,
\end{equation}
where $V_J$ denotes the spin-$J$ irrep and $m_J$ its multiplicity. The quantity
\begin{equation}
\#\mathrm{HW}(N,s) \equiv \sum_J m_J
\end{equation}
counts the number of highest-weight multiplets appearing in $\mathrm{Sym}^N(V_s)$.

To determine the multiplicities, we work with $\mathrm{SU}(2)$ characters. Writing a group element in terms of the class angle $\theta$ as $\mathrm{diag}(e^{i\theta/2},e^{-i\theta/2})$, the character of the spin-$J$ irrep is
\begin{equation}
\chi_J(\theta)=\frac{\sin \big((2J+1)\theta/2\big)}{\sin(\theta/2)}.
\label{eq:su2_character}
\end{equation}
For the spin-$s$ representation the weights are $m\in\{-s,-s+1,\dots,s\}$, so the character is
\begin{equation}
\chi_s(\theta)=\sum_{m=-s}^{s} e^{i m\theta}.
\end{equation}

The character of the symmetric power $\mathrm{Sym}^N(V_s)$ is conveniently obtained from the generating function
\begin{equation}
\mathcal{Z}_s(t;\theta)
\equiv \sum_{N\ge 0} t^N\,\chi_{\mathrm{Sym}^N(V_s)}(\theta).
\end{equation}
Because bosonic symmetrization allows arbitrary occupation of each weight $m$, the contribution of a fixed weight forms a geometric series $(1-t e^{i m\theta})^{-1}$. Summing over all weights gives
\begin{equation}
\mathcal{Z}_s(t;\theta)
= \prod_{m=-s}^{s}\frac{1}{1-t e^{i m\theta}}.
\label{eq:sym_power_gen}
\end{equation}
Thus $\chi_{\mathrm{Sym}^N(V_s)}(\theta)$ is obtained as the coefficient of $t^N$ in the expansion of \ref{eq:sym_power_gen}. In practice this coefficient can be extracted by expanding the product in powers of $t$.

The multiplicities $m_J$ follow from character orthogonality. For class functions of $\theta\in[0,2\pi)$, the $\mathrm{SU}(2)$ Haar inner product reads
\begin{equation}
\langle f,g\rangle
=
\frac{1}{\pi}\int_{0}^{2\pi} d\theta\,\sin^2(\theta/2)\,f(\theta)\,g(\theta),
\end{equation}
under which the characters are orthonormal, $\langle\chi_J,\chi_{J'}\rangle=\delta_{J,J'}$. Projecting onto $\chi_J$ therefore gives
\begin{equation}
m_J
=
\left\langle
\chi_{\mathrm{Sym}^N(V_s)},\chi_J
\right\rangle
=
\frac{1}{\pi}\int_{0}^{2\pi} d\theta\,\sin^2(\theta/2)\,
\chi_{\mathrm{Sym}^N(V_s)}(\theta)\chi_J(\theta).
\label{eq:multiplicity_projection}
\end{equation}
Numerically, the integral in \ref{eq:multiplicity_projection} can be evaluated by discretizing $\theta$ on a uniform grid and replacing the integral by a Riemann sum; the resulting values are then rounded to the nearest integer.

Finally, once the multiplicities $\{m_J\}$ are obtained, the number of highest-weight multiplets is simply
\begin{equation}
\#\mathrm{HW}(N,s)=\sum_{J=0}^{Ns} m_J.
\end{equation}
This quantity counts the number of independent highest-weight states in $\mathrm{Sym}^N(V_s)$, which corresponds to the number of irreducible $\mathrm{SU}(2)$ multiplets appearing in the symmetric $N$-boson Hilbert space. The countings for spin-$2$, $3$, and $4$ are shown in Table.~\ref{tab:spin2_counting}, \ref{tab:spin3_counting} and \ref{tab:spin4_counting}.

\begin{table}[t]
\centering
\renewcommand{\arraystretch}{1.5}
\setlength{\tabcolsep}{10pt}
\begin{tabular}{c|ccccccccccc}
\hline\hline
$N \backslash L$ & 0 & 1 & 2 & 3 & 4 & 5 & 6 & 7 & 8 & 9 & 10 \\
\hline
1 & 0 & 0 & 1 & 0 & 0 & 0 & 0 & 0 & 0 & 0 & 0 \\
2 & 1 & 0 & 1 & 0 & 1 & 0 & 0 & 0 & 0 & 0 & 0 \\
3 & 1 & 0 & 1 & 1 & 1 & 0 & 1 & 0 & 0 & 0 & 0 \\
4 & 1 & 0 & 2 & 0 & 2 & 1 & 1 & 0 & 1 & 0 & 0 \\
5 & 1 & 0 & 2 & 1 & 2 & 1 & 2 & 1 & 1 & 0 & 1 \\
6 & 2 & 0 & 2 & 1 & 3 & 1 & 3 & 1 & 2 & 1 & 1 \\
\hline\hline
\end{tabular}
\caption{Counting of highest-weight states for $N$ identical spin-2 bosons. The table entries give the multiplicity of the $\mathrm{SU}(2)$ irrep with total angular momentum $L$ in the totally symmetric space $\mathrm{Sym}^N(V_2)$. For the Laughlin phase, the largest number of graviton modes is given by $\lfloor N_e/2 \rfloor$.}
\label{tab:spin2_counting}
\end{table}

\begin{table}[h]
\centering
\renewcommand{\arraystretch}{1.5}
\setlength{\tabcolsep}{8pt}
\begin{tabular}{c|ccccccccccccccccccc}
\hline\hline
$N \backslash L$ 
& 0 & 1 & 2 & 3 & 4 & 5 & 6 & 7 & 8 & 9 & 10 & 11 & 12 & 13 & 14 & 15 & 16 & 17 & 18 \\
\hline
1 
& 0 & 0 & 0 & 1 & 0 & 0 & 0 & 0 & 0 & 0 & 0 & 0 & 0 & 0 & 0 & 0 & 0 & 0 & 0 \\
2 
& 1 & 0 & 1 & 0 & 1 & 0 & 1 & 0 & 0 & 0 & 0 & 0 & 0 & 0 & 0 & 0 & 0 & 0 & 0 \\
3 
& 0 & 1 & 0 & 2 & 1 & 1 & 1 & 1 & 0 & 1 & 0 & 0 & 0 & 0 & 0 & 0 & 0 & 0 & 0 \\
4 
& 2 & 0 & 2 & 1 & 3 & 1 & 3 & 1 & 2 & 1 & 1 & 0 & 1 & 0 & 0 & 0 & 0 & 0 & 0 \\
5 
& 0 & 2 & 1 & 4 & 2 & 4 & 3 & 4 & 2 & 3 & 2 & 2 & 1 & 1 & 0 & 1 & 0 & 0 & 0 \\
6 
& 3 & 0 & 4 & 3 & 6 & 3 & 7 & 4 & 6 & 4 & 5 & 2 & 4 & 2 & 2 & 1 & 1 & 0 & 1 \\
\hline\hline
\end{tabular}
\caption{Counting of highest-weight states for $N$ identical spin-3 bosons. The table entries give the multiplicity of the $\mathrm{SU}(2)$ irrep with total angular momentum $L$ in the totally symmetric space $\mathrm{Sym}^N(V_3)$.}
\label{tab:spin3_counting}
\end{table}

\begin{table}[h]
\centering
\renewcommand{\arraystretch}{1.5}
\setlength{\tabcolsep}{8pt}
\begin{tabular}{c|ccccccccccccccccccc}
\hline\hline
$N \backslash L$ 
& 0 & 1 & 2 & 3 & 4 & 5 & 6 & 7 & 8 & 9 & 10 & 11 & 12 & 13 & 14 & 15 & 16 & 17 & 18 \\
\hline
1 
& 0 & 0 & 0 & 0 & 1 & 0 & 0 & 0 & 0 & 0 & 0 & 0 & 0 & 0 & 0 & 0 & 0 & 0 & 0 \\
2 
& 1 & 0 & 1 & 0 & 1 & 0 & 1 & 0 & 1 & 0 & 0 & 0 & 0 & 0 & 0 & 0 & 0 & 0 & 0 \\
3 
& 1 & 0 & 1 & 1 & 2 & 1 & 2 & 1 & 1 & 1 & 1 & 0 & 1 & 0 & 0 & 0 & 0 & 0 & 0 \\
4 
& 2 & 0 & 3 & 1 & 4 & 2 & 4 & 2 & 4 & 2 & 3 & 1 & 2 & 1 & 1 & 0 & 1 & 0 & 0 \\
5 
& 2 & 1 & 4 & 3 & 6 & 5 & 7 & 5 & 7 & 5 & 6 & 4 & 5 & 3 & 3 & 2 & 2 & 1 & 1 \\
6 
& 4 & 1 & 7 & 5 & 11 & 7 & 13 & 9 & 13 & 10 & 12 & 8 & 11 & 7 & 8 & 5 & 6 & 3 & 4 \\
\hline\hline
\end{tabular}
\caption{Counting of highest-weight states for $N$ identical spin-4 bosons. The table entries give the multiplicity of the $\mathrm{SU}(2)$ irrep with total angular momentum $L$ in the totally symmetric space $\mathrm{Sym}^N(V_4)$, truncated to $L \le 18$.}
\label{tab:spin4_counting}
\end{table}

\section{Higher-spin deformations of coherent states\label{sm:higher_spin_deform}}

In the main text, we introduced the unitary transformation
\begin{equation}
\hat U_s(\alpha)=e^{\alpha (\hat b^\dagger)^s-\alpha^*\hat b^s},
\end{equation}
which generates higher-spin deformations of a coherent state. Here we briefly justify two statements used in the main text: (i) the perturbative expansion of matrix elements around $\alpha=0$ has zero radius of convergence for $s>2$, and  
(ii) the form of the leading small-$\alpha$ deformation of the density.

Consider the vacuum matrix element
\begin{equation}
F_s(\alpha)=\langle 0| \hat U_s(\alpha) |0\rangle .
\end{equation}
Expanding formally in powers of $\alpha$ gives a series of the form
\begin{equation}
F_s(\alpha)
=
\sum_{n=0}^{\infty}
a_n \alpha^{2n},
\qquad
a_n \sim (-1)^n \frac{(ns)!}{(2n)!} C_n ,
\label{Fs_series}
\end{equation}
where $C_n$ grows approximately exponentially with $n$ \cite{Braunstein87generalized}.  
The radius of convergence $R$ is determined by
\begin{equation}
\frac{1}{R}
=
\limsup_{n\to\infty}|a_n|^{1/(2n)} .
\end{equation}

The dominant growth arises from the factorial ratio $(ns)!/(2n)!$.  
Using Stirling's approximation and taking the $2n$-th root gives
\begin{equation}
\left(\frac{(ns)!}{(2n)!}\right)^{1/(2n)}
\sim
\mathrm{const}\times n^{(s-2)/2}.
\end{equation}
For $s>2$, this diverges as $n\to\infty$, implying $R=0$. Thus, the Taylor expansion around $\alpha=0$ has zero radius of convergence.  
This explains why a naive normal-ordered expansion cannot be used to define the operator in perturbation theory, as noted in the main text. And we also confirmed this with numerical calculations of a truncated series and various $\alpha$ parameters in Eq.~\ref{Fs_series}.

Despite the nonconvergent perturbative series, the leading deformation of the wavefunction for small $\alpha$ can be obtained directly from the operator definition.  
Expanding the unitary operator to first order,
\begin{equation}
\hat U_s(\alpha)
=
1+\alpha (\hat b^\dagger)^s-\alpha^*\hat b^s+\mathcal{O}(\alpha^2),
\end{equation}
and acting on the oscillator ground state gives
\begin{equation}
\hat U_s(\alpha)|0\rangle
=
|0\rangle+\alpha(\hat b^\dagger)^s|0\rangle+\mathcal{O}(\alpha^2).
\end{equation}

In the coherent-state representation $\langle \boldsymbol r|0\rangle\propto e^{-r^2}$ with $\boldsymbol r=r e^{i\theta_r}$, the operator $(\hat b^\dagger)^s$ generates the angular harmonic $r^s e^{i s\theta_r}$.  
Consequently
\begin{equation}
\langle \boldsymbol r|\hat U_s(\alpha)|0\rangle
\propto
e^{-r^2}
\Big[
1+\alpha r^s e^{is\theta_r}
+\alpha^* r^s e^{-is\theta_r}
\Big]
+\mathcal{O}(\alpha^2).
\end{equation}
For real $\alpha$ this becomes
\begin{equation}
\langle \boldsymbol r|\hat U_s(\alpha)|0\rangle
\sim
e^{-r^2}
\left[
1+2\alpha r^s\cos(s\theta_r)
\right]
+\mathcal{O}(\alpha^2).
\end{equation}

Therefore the leading effect of $\hat U_s(\alpha)$ is an $s$-fold angular modulation of the Gaussian density profile,
\begin{equation}
\rho^{(s)}_\alpha(\boldsymbol r)
\sim
e^{-r^2}
\left[
1+2\alpha r^s\cos(s\theta_r)
\right],
\end{equation}
which is the expression used in the main text.  This deformation explicitly carries angular momentum $s$, providing the geometric interpretation of $(\hat b)^s$ as generators of higher-spin neutral modes.

Note that one can also use Pad\'e approximants to get the results above, which replace the divergent power series by a rational function whose Taylor expansion reproduces the first several coefficients. This allows $F_s(\alpha)$ to be analytically continued to finite $\alpha$ and yields well-defined probability distributions.

\section{Lifetime and mutual scattering of higher-spin modes\label{sm:hs_lifetime}}

In this section, we present additional numerical results for higher-spin modes. First, according to the method introduced in Section.~\ref{sm:counting_symNs}, we can calculate the dimension of the Hilbert space spanned by a collection of higher-spin modes with different spins, and if we assume that there is no relative angular momenta between them (as we did for the Hilbert space spanned by spin-$2$ graviton modes), the counting is explicitly shown in Table.~\ref{Table1}. After carefully examining the counting, we find that the Laughlin higher-spin modes only span a \textit{proper subspace} of the Gaffnian conformal Hilbert space.

\begin{table}[h]
\centering
\renewcommand{\arraystretch}{1.5}
\setlength{\tabcolsep}{10pt}
\begin{tabular}{c|cccccccc}
    \hline\hline
          $N_e$ & $\mathbf{3}$ & $\mathbf{4}$ & $\mathbf{5}$ & $\mathbf{6}$ & $\mathbf{7}$ & $\mathbf{8}$ & $\mathbf{9}$ & $\mathbf{10}$ \\
         \hline\hline
          $\mathcal{H}_{\text{haf}}$& 2 & 5 & 5 & 10  & 10 & 18 & 18 & 30 \\
         \hline
         $\mathcal{H}_{\text{HS}}$& 3 & 7 & 8 & 18  & 19 & 33 & 42 & 61 \\
         \hline
         $\mathcal{H}_{\text{gaf}}$& 3 & 8 & 17 & 45  & 113 & 313 & 862 & 2465 \\
         \hline\hline
    \end{tabular}
    \caption{\textbf{The counting of states in the Haffnian CHS $\mathcal{H}_{\text{haf}}$, the Gaffnian CHS $\mathcal{H}_{\text{gaf}}$, and the subspace $\mathcal{H}_{\text{HS}}$ spanned by Laughlin HS modes for the filling $N_o = 3N_e-2$.} The counting within $\mathcal{H}_{\text{HS}}$ can be calculated based on Sec.\ref{sm:counting_symNs}.The counting implies that $\mathcal{H}_{\text{haf}} \subset \mathcal{H}_{\text{HS}} \subset \mathcal{H}_{\text{HS}}$.}
    \label{Table1}
\end{table}

In Fig.~\ref{fig:lifetime}, we present the spectral functions of the Laughlin higher-spin modes for several system sizes. The $s=2$ graviton mode exhibits a sharp resonance peak in all systems shown. The $s=3$ and $s=4$ modes also display very sharp peaks for the system sizes considered here, as they remain below the continuum. For the higher-spin modes at larger $s$, we observe an interesting additional feature: besides the main peak, each of them develops a subleading resonance peak at an energy close to that of the graviton mode. This behavior is consistent with our finite-size scaling analysis of the higher-spin excitation energies, which suggests that in the thermodynamic limit, all higher-spin modes become almost energetically degenerate with the graviton mode.

\begin{figure}[h]
\begin{center}
\includegraphics[width=\linewidth]{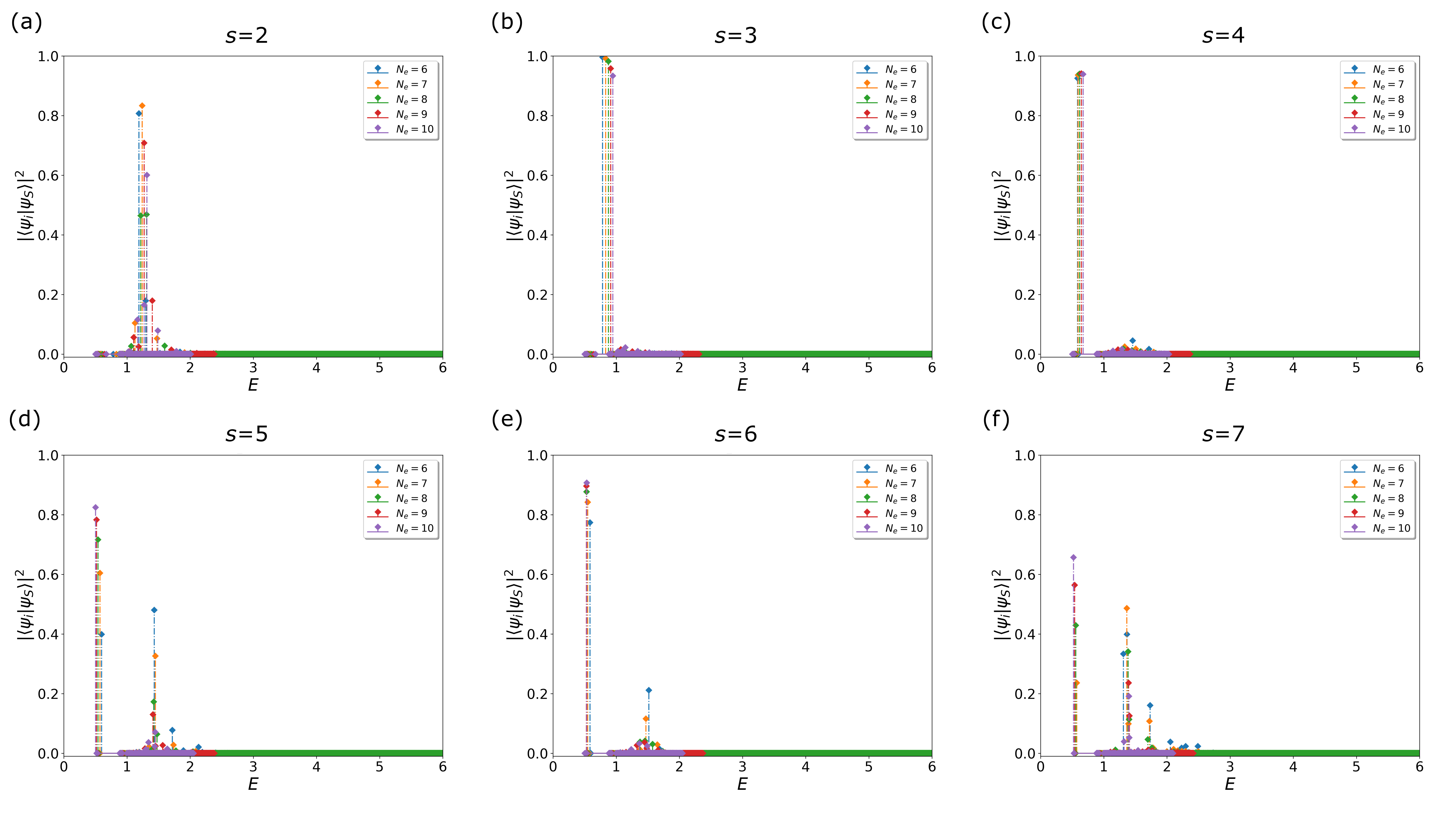}
\caption{\textbf{(a)-(f) Spectral functions of higher spin modes with $s = 2$ to $7$.} }
\label{fig:lifetime}
\end{center}
\end{figure}

In Fig.~\ref{fig:scattering}, we show the scattering amplitudes between, on the left, the higher-spin mode $\psi_4$, the spin-$4$ magnetoroton, and the spin-$2+2$ mode generated by $\sum_{i=1}^{N_e} \hat{b}_i^2 |\psi_2 \rangle$, and, on the right, the higher-spin mode $\psi_5$, the spin-$5$ magnetoroton, and the spin-$2+3$ mode generated by $\sum{i=1}^{N_e} \hat{b}_i^2 |\psi_3 \rangle$. The results show clearly that, in finite systems, the higher-spin modes (or more precisely, the single-mode-approximation states with finite wave vectors) are never identical to the corresponding magnetoroton modes, while the overlap between the spin-$4$/$5$ higher-spin mode and the spin-$2+2$/$2+3$ mode remains small. Interestingly, the magnetoroton mode always lies within the subspace spanned by the other two modes.

\begin{figure}[ht]
\begin{center}
\includegraphics[width=\linewidth]{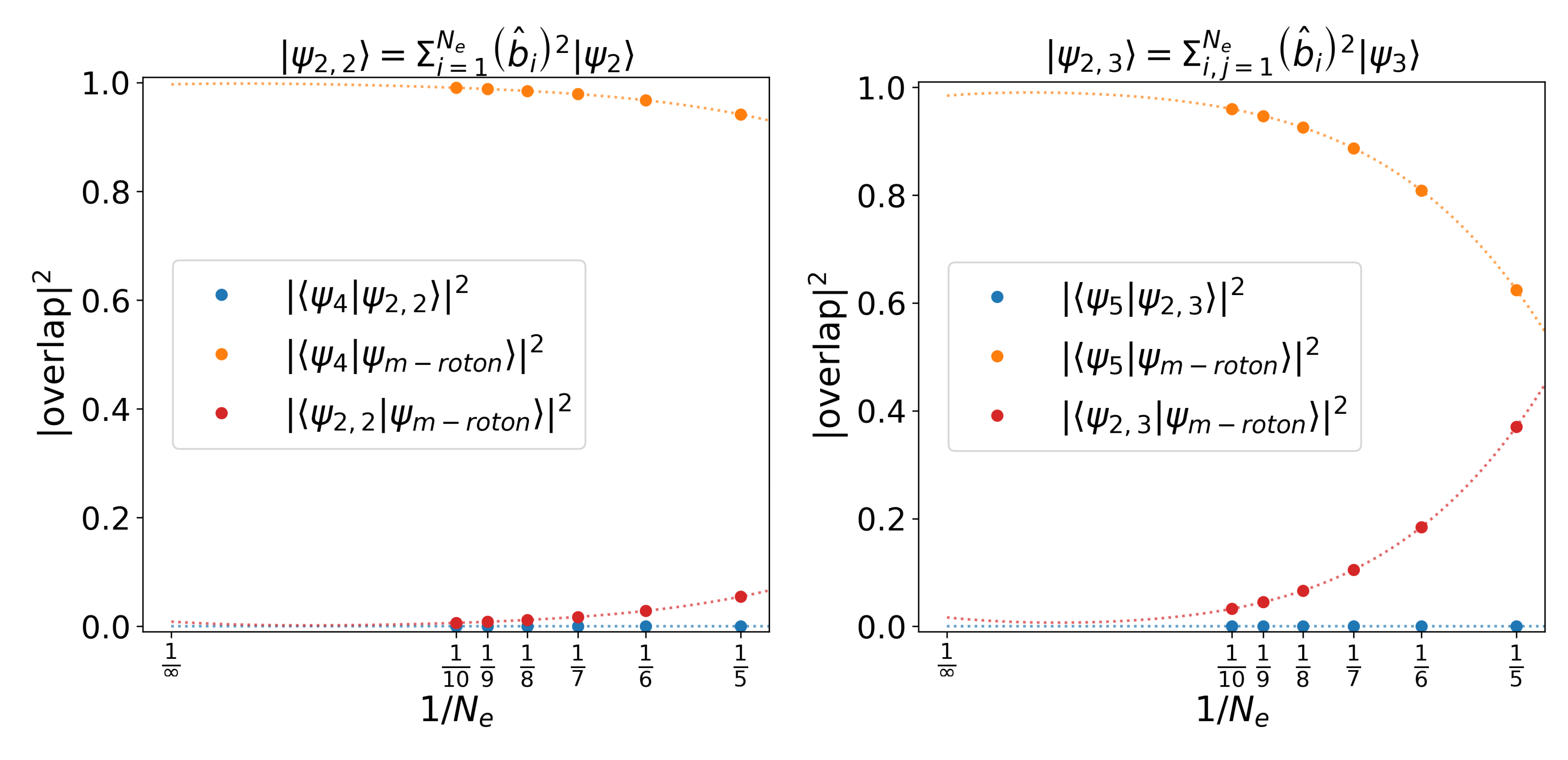}
\caption{\textbf{Scattering amplitude between different states.} Here ``m-roton'' denotes the magnetoroton.. $psi_s$ denotes the higher-spin modes with spin $s$.}
\label{fig:scattering}
\end{center}
\end{figure}

\section{\texorpdfstring{Higher-spin operators and the $W_\infty$ algebra}{Higher-spin operators and the Winfty algebra}\label{sm:higher_spin_algebra}}

In this section, we will show more details about the higher-spin operators, including their representations on the disk and the sphere geometry, and their algebraic properties.

We first derive Eq.~(7) of the main text. Starting from the first-quantized single-particle operators:
$$
\mathcal L_{m,n} = (b^\dagger)^m b^n, \qquad m,n \in \mathbb{N}, \qquad [b, b^\dagger] = 1.
$$

\textbf{Lemma.} For any non-negative integers $p,q$, we have \cite{Pain_comm}
\begin{equation}
\left[ b^p, (b^\dagger)^q \right]
= \sum_{s=1}^{\min(p,q)} \frac{p! q!}{(p-s)! (q-s)! s!} (b^\dagger)^{q-s} b^{p-s}.
\label{eq:commutator_identity}
\end{equation}

We compute:
\begin{equation}
\begin{aligned}
\left[ \mathcal L_{m, n}, \mathcal L_{k,l} \right]
&= \left[ (b^\dagger)^m b^n,  (b^\dagger)^k b^l \right] = (b^\dagger)^m \left[ b^n, (b^\dagger)^k \right] b^l 
 - (b^\dagger)^k \left[ b^l, (b^\dagger)^m \right] b^n.
\end{aligned}
\end{equation}

Using equation \eqref{eq:commutator_identity} with $p = n$, $q = k$:
\begin{equation}
\left[ b^n, (b^\dagger)^k \right]
= \sum_{s=1}^{\min(n,k)} \frac{n! k!}{(n-s)! (k-s)! s!} (b^\dagger)^{k-s} b^{n-s}.
\end{equation}

Multiplying on the left by $(b^\dagger)^m$ and on the right by $b^l$, we get:
\begin{equation}
(b^\dagger)^m (b^\dagger)^{k-s} = (b^\dagger)^{m+k-s}, \qquad 
b^{n-s} b^l = b^{n+l-s},
\end{equation}
so the first term becomes:
\begin{equation}
\sum_{s=1}^{\min(n,k)} 
\frac{n! k!}{(n-s)! (k-s)! s!}   \mathcal{L}_{m+k-s,  n+l-s}.
\end{equation}

Swapping $(m,n) \leftrightarrow (k,l)$ gives the second term:
\begin{equation}
\sum_{s=1}^{\min(m,l)} 
\frac{m! l!}{(m-s)! (l-s)! s!}   \mathcal{L}_{k+m-s,  l+n-s}.
\end{equation}

Combining both terms, we obtain the $W_\infty$ algebra:
\begin{equation}
\left[ \mathcal{L}_{m,n}, \mathcal{L}_{k,l} \right] 
= \sum_{s=1}^{\min(n,k)} \frac{n! k!}{(n-s)! (k-s)! s!}   \mathcal{L}_{m+k-s,  n+l-s} 
- (m \leftrightarrow k,  n \leftrightarrow l ).
\label{eq:disk-algebra}
\end{equation}
This algebra clearly retains the same form for multi-particle systems since the ladder operators for different particles commute.

Now we derive the representation of this algebra on the sphere. For monopole strength $N_\phi = 2S$, we define
\begin{equation}
\hat{\mathcal{L}}^{S}_{m,n}
= \sum_{s=-S}^{S} \mathcal{C}_{m,n}(s)  \hat{c}^{\dagger}_{ s+m-n} \hat{c}_{ s},
\qquad
\mathcal{C}_{m,n}(s)
= C_{S, s-n;  m, m}^{S, s-n+m}  C_{S, s;  n, -n}^{S, s-n},
\label{eq:def-Lmn-sphere}
\end{equation}
with $C_{j_1,m_1; j_2,m_2}^{J,M}$ the Clebsch–Gordan coefficients, and with the understanding that any term with $|s|>S$ or $|s+m-n|>S$ is zero. The equivalence of $\hat{\mathcal{L}}_{m,n}$ operators and $\hat{\mathcal{L}}^{S}_{m,n}$ can be verified numerically: by acting them on the corresponding ground state and removing the geometry-dependent single-particle normalization factor, one can see that the overlap between the remaining states is unity.

By using $[\hat{c}^{\dagger}_{a}\hat{c}_{b}, \hat{c}^{\dagger}_{c}\hat{c}_{d}]
= \delta_{b,c} \hat{c}^{\dagger}_{a}\hat{c}_{d} - \delta_{a,d} \hat{c}^{\dagger}_{c}\hat{c}_{b}$, we obtain
\begin{align}
\bigl[\hat{\mathcal{L}}^{S}_{m,n}, \hat{\mathcal{L}}^{S}_{k,l}\bigr]
&= \sum_{s,s'=-S}^{S}\mathcal{C}_{m,n}(s) \mathcal{C}_{k,l}(s') 
\Bigl(\delta_{s+m-n, s'} \hat{c}^{\dagger}_{ s+m-n+k-l}\hat{c}_{ s}
- \delta_{s, s'+k-l} \hat{c}^{\dagger}_{ s'+m-n}\hat{c}_{ s'}\Bigr) \notag\\
&= \sum_{s=-S}^{S}
\Bigl[\mathcal{C}_{m,n}(s) \mathcal{C}_{k,l}(s+m-n)
- \mathcal{C}_{k,l}(s) \mathcal{C}_{m,n}(s+k-l)\Bigr] 
\hat{c}^{\dagger}_{ s+m-n+k-l}\hat{c}_{ s}.
\label{eq:raw-comm}
\end{align}
Eq.~\ref{eq:raw-comm} shows that the algebra is closed on the sphere: the result is again a one–body operator with shift $(m-n)+(k-l)$, so the algebra closes exactly within the space of operators $\hat{\mathcal L}^{S}_{m,n}$.  
Eq.~\ref{eq:raw-comm} therefore proves that the generators form a closed algebra on the sphere.

Because the single-particle Hilbert space contains only $2S+1$ orbitals, the algebra is finite-dimensional at fixed $S$. The structure constants acquire curvature and finite-size corrections, so the algebra is not identical to the planar guiding–center ladder operator algebra. Instead, the spherical algebra should be viewed as a finite-$N_\phi$ realization of the guiding-center $W_\infty$ algebra, i.e.,
\begin{equation}
\text{sphere algebra}  \xrightarrow[]{S\to\infty}  W_\infty ,
\end{equation}
while at finite $S$ the algebra represents a curvature-deformed, finite-size truncation of the planar guiding-center algebra.

Because of degeneracies in the irreducible representations of $SO(3)$, it suffices to focus on the highest-weight (HW) states $|L,L\rangle$. Since the ground state $|\psi_0 \rangle = |0, 0\rangle$ is an HW state, and the angular momentum ladder operators are special cases of $\mathcal{L}$-operators: $\hat{L}_{+} = \hat{\mathcal{L}}^S_{1,0}$,  $\hat{L}_{-} =\hat{\mathcal{L}}^S_{0,1}$, one immediately sees that \textit{spin-$1$ modes vanish}.  By noticing that it is sufficient and necessary for $\left[ \hat{\mathcal{L}}^S_{1,0}, \hat{\mathcal{L}}^S_{m,n} \right] = 0$ to hold by setting $n=0$, one can also prove that \textit{all the higher-spin modes are HW states}.

\section{Geometric construction of neutral gapped excitations\label{sm:all_gapped_are_geometric}}

In this section, we show how to define the neutral excitations in an FQH droplet and prove that any neutral excitation can be constructed as a product of density operators in any finite-dimensional Hilbert space.

First, we consider the finite-dimensional case with fermions and a disk geometry without loss of generality (Note that particle statistics and geometry do not matter in our discussion, so all conclusions apply to bosons and other geometries automatically). We will assume that certain physical conventions and quantities are a priori known, such as particles and orbitals, creation and annihilation operators, commutation relations, Fock states/ground states, etc., without explicit definitions. Although we intend to make the proof formal to prepare for the infinite-dimensional case, the idea is simply to prove the existence of an invertible linear transformation for any finite-dimensional case.

\begin{definition}
A many-body Hilbert space $\mathcal{H}_{N_e, N_o}$ is the one spanned by all Fock states with the filling $(N_e, N_o)$, i.e., $N_e$ particles with $N_o$ orbitals:
\begin{equation}
| \phi_i \rangle = \prod_{j=1}^{N_e} \hat{c}^\dagger_{k_{i,j}} \ |vac\rangle, \ N_e \in \mathbb{Z}_{+}, N_e \le N_o \in \mathbb{Z}_{+}, 
\end{equation}
where $\hat{c}^\dagger_{k}$ denotes the creation operator at $k$ orbital, obeying the canonical anticommutation relations (CARs) and the orbital index $0 \le k_{i,j}\le N_o-1$.
\end{definition}

$\mathcal{H}_{N_e, N_o}$ is the Hilbert space containing all the many-body states with the same number of electrons and fluxes as the ground state in a fractional quantum Hall (FQH) phase, and we can define the excitations in such a space as:
\begin{definition}
With the ground state of an FQH phase $| \psi_0 \rangle \in \mathcal{H}_{N_e,N_o}$, a neutral gapped excitation is a quantum state $| \psi \rangle \in \mathcal{H}_{N_e,N_o}$ that obeys $\langle \psi_0 | \psi \rangle = 0$.
\end{definition}
Note that it is possible for an FQH phase to host multiple ground states, in which case the neutral excitations will be orthogonal to the subspace spanned by all the ground states.

\begin{theorem}
$\forall | \phi_i \rangle = \prod_{j=1}^{N_e} \hat{c}^\dagger_{k_{i,j}} \ |vac\rangle \in \mathcal{H}_{Ne,No}$, define $\mathcal{C}_{a,b} = \hat{c}^\dagger_{a+b} \hat{c}_a$, where $a, b \in \mathbb{Z}$, $\max \{0,-b\} \le a \le \min \{ N_o, N_o - b\}$,
\begin{equation}
    \left\{ \prod_{j=1}^{N_{max}} \mathcal{C}_{a_j,b_j} | \phi_i \rangle| N_{max} \in \mathbb{Z}, 1 \le N_{max} \le N_e\right\} \cong \mathcal{H}_{N_e, N_o}.
\end{equation}
\end{theorem}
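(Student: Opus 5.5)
The plan is to read ``$\cong$'' as a statement about spans: the linear span of the set of states $\prod_{j=1}^{N_{max}} \mathcal{C}_{a_j,b_j}|\phi_i\rangle$, with $|\phi_i\rangle$ a fixed Fock state and $1\le N_{max}\le N_e$, equals $\mathcal{H}_{N_e,N_o}$. It then suffices to show two things. First, every such product state lies in $\mathcal{H}_{N_e,N_o}$ (or vanishes). Second, every Fock basis state $|\phi_{i'}\rangle$ can be reached from $|\phi_i\rangle$ by a single product of at most $N_e$ hopping operators.

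The first inclusion is immediate. Each $\mathcal{C}_{a,b}=\hat c^\dagger_{a+b}\hat c_a$ conserves particle number. The index constraint $\max\{0,-b\}\le a\le \min\{N_o,N_o-b\}$ keeps both orbitals in range; I would restrict to $a,a+b\le N_o-1$ to match the Definition's orbital labels. Acting on a Fock state, $\mathcal{C}_{a,b}$ returns either zero or $\pm$ another Fock state of the same filling.

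The main step is the reverse inclusion. Write the occupied sets as $K=\{k_{i,1},\dots,k_{i,N_e}\}$ and $K'=\{k_{i',1},\dots,k_{i',N_e}\}$. Let $A=K\setminus K'$ and $B=K'\setminus K$; they have equal size $r\le N_e$. Order $A=\{a_1<\dots<a_r\}$ and $B=\{c_1<\dots<c_r\}$, and set $b_j=c_j-a_j$. Consider $\prod_{j=1}^{r}\mathcal{C}_{a_j,b_j}$, applied right-to-left. Each factor empties an orbital $a_j$, which is still occupied because $a_j\in K$ and no earlier factor fills an element of $A$. It then fills $c_j$, which is still empty because $c_j\notin K$ and the $c$'s are distinct. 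So no step annihilates the state, and by CARs the result is $\pm|\phi_{i'}\rangle$. If $r=0$ (i.e.\ $|\phi_{i'}\rangle=|\phi_i\rangle$), I would use $N_{max}=1$ with $b=0$, since $\mathcal{C}_{a,0}=\hat n_a$ returns $|\phi_i\rangle$ for any occupied $a$. Thus every basis vector lies in the span, and the spans coincide.

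I expect the main obstacle to be bookkeeping, not depth. The condition $N_{max}\ge1$ forces the $r=0$ fix above. The ordering of factors must be chosen so that no intermediate state vanishes; the sketch handles this by pairing disjoint source and target sets. One should also note that the overall sign is irrelevant for spanning.

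Finally, I would connect the result to the main text's deformation basis. The states $\hat{\mathcal{L}}_{m,n}$ are obtained from the $\hat c^\dagger\hat c$ through the triangular invertible matrix $M$ quoted there. Hence the theorem transfers to products of $W_\infty$ generators acting on $|\psi_0\rangle$, at least when $|\psi_0\rangle$ has nonzero overlap with a reference Fock state, e.g.\ its root configuration. That extension would need an extra argument about expanding $|\psi_0\rangle$, and I would flag it as the genuinely delicate part beyond the stated theorem.
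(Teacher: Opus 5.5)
Your proposal is correct and is essentially the paper's argument. The paper likewise shows (its Lemma 1) that $\mathcal{C}_{a,b}$ sends a Slater determinant to $0$ or to $\pm$ another Slater determinant, and (its Lemma 2) reaches any target occupation set by pairing orbitals of $S\setminus T$ with those of $T\setminus S$ one hop at a time, so every Fock state is obtained with at most $N_e$ factors. Your fixed sorted pairing versus the paper's iterative choice makes no difference. Your use of $\mathcal{C}_{a,0}=\hat n_a$ for the $r=0$ case is a small improvement: the paper handles that case only by silently relaxing the bound to $0\le N_{max}$, and it also silently tightens $a\le N_o$ to orbital labels $\le N_o-1$, as you do.
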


\begin{definition}[Fixed-particle-number Hilbert space]
Fix integers $N_o\ge 1$ and $0\le N_e\le N_o$. Let $\{\hat c_k^\dagger,\hat c_k\}_{k=0}^{N_o-1}$ be fermionic creation/annihilation operators obeying the canonical anticommutation relations. The many-body Hilbert space at filling $(N_e,N_o)$, denoted by $\mathcal{H}_{N_e,N_o}$, is the $N_e$-particle subspace of the Fock space generated by these operators, i.e.,
\begin{equation}
\mathcal{H}_{N_e,N_o}
:= \mathrm{span}\Big\{
|\phi_{K}\rangle
:= \hat c_{k_1}^\dagger \hat c_{k_2}^\dagger \cdots \hat c_{k_{N_e}}^\dagger |vac\rangle
\ \Big|\ 
0\le k_1<k_2<\cdots<k_{N_e}\le N_o-1
\Big\}.
\end{equation}
\end{definition}

\begin{definition}[Neutral excitations]
Let $|\psi_0\rangle\in\mathcal{H}_{N_e,N_o}$ be a normalized ground state of an FQH phase (at fixed $(N_e,N_o)$). The neutral excitation subspace is defined as the orthogonal complement of the ground-state subspace. In the non-degenerate case, a (neutral) gapped excitation is any state
\begin{equation}
|\psi\rangle\in\mathcal{H}_{N_e,N_o}
\quad \text{such that}\quad
\langle \psi_0|\psi\rangle=0.
\end{equation}
If the ground state is $g$-fold degenerate and spans $\mathcal{G}\subset\mathcal{H}_{N_e,N_o}$, then a neutral excitation is any state $|\psi\rangle\in\mathcal{H}_{N_e,N_o}$ satisfying
\begin{equation}
\langle \psi_g|\psi\rangle=0 \quad \text{for all}\quad |\psi_g\rangle\in\mathcal{G},
\end{equation}
equivalently, $|\psi\rangle\in\mathcal{G}^\perp$.
\end{definition}

\begin{lemma}[Action of $\mathcal{C}_{a,b}$ on a Slater determinant]
Let
\begin{equation}
|\phi\rangle
= \hat c^\dagger_{k_1}\hat c^\dagger_{k_2}\cdots \hat c^\dagger_{k_{N_e}}|vac\rangle,
\qquad
0\le k_1<k_2<\cdots<k_{N_e}\le N_o-1,
\end{equation}
and define $\mathcal{C}_{a,b}=\hat c^\dagger_{a+b}\hat c_a$ with $a,a+b\in\{0,1,\dots,N_o-1\}$.
Then:
\begin{enumerate}
\item If $a\notin\{k_1,\dots,k_{N_e}\}$, then $\mathcal{C}_{a,b}|\phi\rangle=0$.
\item If $a\in\{k_1,\dots,k_{N_e}\}$ and $a+b\in\{k_1,\dots,k_{N_e}\}$, then $\mathcal{C}_{a,b}|\phi\rangle=0$.
\item If $a\in\{k_1,\dots,k_{N_e}\}$ and $a+b\notin\{k_1,\dots,k_{N_e}\}$, then $\mathcal{C}_{a,b}|\phi\rangle=\pm |\phi'\rangle$, where $|\phi'\rangle$ is the Slater determinant obtained from $|\phi\rangle$ by replacing the occupied orbital $a$ with $a+b$ and then re-ordering the creation operators in increasing order. The sign $\pm$ is determined by CAR re-orderings.
\end{enumerate}
\end{lemma}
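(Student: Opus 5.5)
The plan is to compute $\mathcal{C}_{a,b}|\phi\rangle=\hat c^\dagger_{a+b}\hat c_a\,\hat c^\dagger_{k_1}\cdots\hat c^\dagger_{k_{N_e}}|vac\rangle$ directly from the canonical anticommutation relations, in two stages: first let $\hat c_a$ act, then let $\hat c^\dagger_{a+b}$ act on the result.

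\emph{Action of $\hat c_a$.} Anticommute $\hat c_a$ to the right through the string of creation operators. Each crossing with $\hat c^\dagger_{k_j}$, $k_j\neq a$, gives only a factor $-1$, since $\{\hat c_a,\hat c^\dagger_{k_j}\}=\delta_{a,k_j}$. If $a\notin\{k_j\}$, the operator reaches the vacuum, and $\hat c_a|vac\rangle=0$; this proves item 1. If $a=k_p$ for some $p$, the only surviving term comes from the contraction $\{\hat c_a,\hat c^\dagger_{k_p}\}=1$. Using $\hat c_a\hat c^\dagger_a=1-\hat c^\dagger_a\hat c_a$, the term $\hat c^\dagger_a\hat c_a$ pushes $\hat c_a$ further right, where it meets no further $a$ and annihilates the vacuum. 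We therefore obtain $(-1)^{p-1}$ times the $(N_e-1)$-particle Slater determinant with $k_p$ removed.

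\emph{Action of $\hat c^\dagger_{a+b}$.} Apply $\hat c^\dagger_{a+b}$ to that state. If $a+b=k_q$ with $q\neq p$, it is still occupied, and anticommuting $\hat c^\dagger_{a+b}$ next to $\hat c^\dagger_{k_q}$ produces $(\hat c^\dagger_{k_q})^2=0$ by the CARs; this gives item 2. One edge case must be flagged: $b=0$, where $a+b=a$. Then $\mathcal{C}_{a,0}$ is the number operator and returns $|\phi\rangle$ rather than zero. So item 2 should be read with $b\neq0$, i.e. $a+b$ ranging over occupied orbitals other than the one emptied. I will state this caveat explicitly.

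\emph{Item 3 and the sign.} If $a+b$ is unoccupied, the result is a nonzero product of $N_e$ distinct creation operators, namely the occupation set $\{k_j\}\setminus\{a\}\cup\{a+b\}$, with $\hat c^\dagger_{a+b}$ placed leftmost. Reordering into increasing order means moving $\hat c^\dagger_{a+b}$ past the $r$ operators with smaller index, which costs $(-1)^r$. The total sign is then $(-1)^{p-1+r}$, where $r=\#\{k_j\neq a: k_j<a+b\}$. Equivalently, it is $(-1)$ raised to the number of occupied orbitals strictly between $a$ and $a+b$. This gives $\pm|\phi'\rangle$ with $|\phi'\rangle$ exactly as described. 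The resulting state is normalized because the new occupation set consists of distinct indices.

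There is no real obstacle in this argument. The only care needed is the sign bookkeeping and the $b=0$ edge case. The explicit sign formula is worth recording because the theorem's invertibility argument later uses this lemma: it shows that each $\mathcal{C}_{a,b}$ maps a Fock state to a single Fock state or to zero. Consequently, products of $\mathcal{C}_{a_j,b_j}$ acting on one Slater determinant reach every other occupation pattern by at most $N_e$ single-orbital moves.
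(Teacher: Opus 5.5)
Your proof is correct and follows the same two-stage route as the paper: act with $\hat c_a$, then with $\hat c^\dagger_{a+b}$, using Pauli exclusion and a CAR reordering sign. You go further by giving the explicit sign, $(-1)$ raised to the number of occupied orbitals strictly between $a$ and $a+b$, and your $b=0$ caveat is a genuine correction the paper misses: its item-2 argument invokes Pauli exclusion for $a+b\in K$ without noticing that for $b=0$ the orbital $a+b=a$ has just been vacated, so $\mathcal{C}_{a,0}|\phi\rangle=|\phi\rangle\neq 0$ (harmless downstream, since the next lemma only uses moves $\hat c^\dagger_{t_j}\hat c_{s_j}$ with $s_j\neq t_j$).
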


\begin{proof}
Let $K=\{k_1,\dots,k_{N_e}\}$ be the set of occupied orbitals in $|\phi\rangle$.
\begin{enumerate}
\item If $a\notin K$, then $\hat c_a|\phi\rangle=0$ because $\hat c_a$ annihilates the state unless orbital $a$ is occupied. Hence $\mathcal{C}_{a,b}|\phi\rangle=\hat c^\dagger_{a+b}\hat c_a|\phi\rangle=0$.
\item If $a\in K$, then $\hat c_a|\phi\rangle$ removes the fermion in orbital $a$ and produces a nonzero $(N_e-1)$-particle Slater determinant (up to a sign determined by commuting $\hat c_a$ through the ordered creation operators). Acting with $\hat c^\dagger_{a+b}$ yields zero if $a+b\in K$ (Pauli exclusion), hence $\mathcal{C}_{a,b}|\phi\rangle=0$.
\item If $a\in K$ and $a+b\notin K$, then $\hat c_a|\phi\rangle\neq 0$ and $\hat c^\dagger_{a+b}$ creates a particle in an unoccupied orbital, giving a nonzero $N_e$-particle Slater determinant. Re-ordering the resulting creation operators into increasing order yields $|\phi'\rangle$ up to an overall sign from CAR. Thus $\mathcal{C}_{a,b}|\phi\rangle=\pm|\phi'\rangle$.
\end{enumerate}
\end{proof}

\begin{lemma}[Constructive map between two occupation configurations]
Let $S,T\subset\{0,1,\dots,N_o-1\}$ be two occupation sets with $|S|=|T|=N_e$, and let
$|S\rangle$ and $|T\rangle$ be the corresponding Slater determinants (with creation operators ordered increasingly).
Define the set differences
\begin{equation}
S\setminus T=\{ \text{orbitals occupied in $S$ but not in $T$}\},\qquad
T\setminus S=\{ \text{orbitals occupied in $T$ but not in $S$}\}.
\end{equation}
Then $|S\setminus T|=|T\setminus S|=:m\le N_e$, and there exist pairs $(s_j,t_j)$ with
$s_j\in S\setminus T$ and $t_j\in T\setminus S$ for $j=1,\dots,m$ such that
\begin{equation}
\left(\prod_{j=1}^m \hat c^\dagger_{t_j}\hat c_{s_j}\right)|S\rangle=\pm |T\rangle,
\end{equation}
and the intermediate vectors obtained after each partial product are all nonzero.
\end{lemma}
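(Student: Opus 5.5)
The set identity $|S\setminus T|=|T\setminus S|$ is elementary. Since $|S|=|T|=N_e$, we have $|S\setminus T| = N_e-|S\cap T| = |T\setminus S|$, and both are at most $N_e$. Call this common value $m$. If $m=0$, then $S=T$ and the empty product (the identity) already gives $|S\rangle=|T\rangle$, so assume $m\ge 1$.

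Enumerate $S\setminus T=\{s_1,\dots,s_m\}$ and $T\setminus S=\{t_1,\dots,t_m\}$ in any fixed order, for instance increasing, and pair $s_j$ with $t_j$. Each factor $\hat c^\dagger_{t_j}\hat c_{s_j}$ equals $\mathcal{C}_{a,b}$ with $a=s_j$ and $b=t_j-s_j$, and both orbitals lie in $\{0,\dots,N_o-1\}$. So the previous lemma (action of $\mathcal{C}_{a,b}$ on a Slater determinant) applies at every step.

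The proof is then an induction on $j$, applying the factors one at a time starting with $j=1$ acting directly on $|S\rangle$. The claim is that after applying the first $j$ factors we obtain $\pm|S_j\rangle$, where
\begin{equation}
S_j=\big(S\setminus\{s_1,\dots,s_j\}\big)\cup\{t_1,\dots,t_j\}.
\end{equation}
For the step, look at $S_{j-1}$. It contains $s_j$, because $s_j\in S$ and the only orbitals removed so far are $s_1,\dots,s_{j-1}$, which are distinct from $s_j$. It does not contain $t_j$, because $t_j\notin S$ and the only orbitals added so far are $t_1,\dots,t_{j-1}$, which are distinct from $t_j$. These are exactly the hypotheses of case 3 of the previous lemma, so the step yields $\pm|S_j\rangle\neq 0$. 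In particular every intermediate vector is a nonzero signed Slater determinant.

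At $j=m$ the occupation set is
\begin{equation}
S_m=(S\setminus(S\setminus T))\cup(T\setminus S)=(S\cap T)\cup(T\setminus S)=T,
\end{equation}
so the full product gives $\pm|T\rangle$.

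The ordering convention matters. If the product $\prod_j$ is read as an operator product with $j=1$ leftmost, then the $j=1$ factor acts last. In that case run the same induction in reverse order, starting from $j=m$. The membership argument is symmetric in the indices, so it goes through unchanged. The only real care needed is to check that distinctness of the $s_j$ among themselves, and of the $t_j$ among themselves, rules out cases 1 and 2 of the previous lemma. The global sign is just the product of the CAR signs and needs no computation.
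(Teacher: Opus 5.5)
Your proof is correct and follows the same route as the paper. Both arguments apply successive single-particle moves $\hat c^\dagger_{t_j}\hat c_{s_j}$, each falling under case 3 of the preceding lemma, until the occupation set becomes $T$; your up-front bijection $s_j\leftrightarrow t_j$ is equivalent to the paper's stepwise choice $t_j\in T\setminus S_{j-1}$, $s_j\in S_{j-1}\setminus T$, which it justifies by the symmetric difference shrinking by two. Your operator-ordering caveat (which the paper skips) is harmless, because $s_j\in S$ and $t_k\notin S$ for all $j,k$ make the bilinears $\hat c^\dagger_{t_j}\hat c_{s_j}$ pairwise commute.
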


\begin{proof}
If $S=T$, take $m=0$ and the statement holds trivially. Assume $S\neq T$.
Pick any $t_1\in T\setminus S$. Since $|S|=|T|$, we must have $S\setminus T\neq \emptyset$;
choose any $s_1\in S\setminus T$. By construction, $s_1$ is occupied in $|S\rangle$ while
$t_1$ is unoccupied in $|S\rangle$. Hence, by Lemma~1,
\begin{equation}
\hat c^\dagger_{t_1}\hat c_{s_1}|S\rangle=\pm |S_1\rangle\neq 0,
\end{equation}
where
\begin{equation}
S_1=(S\setminus\{s_1\})\cup\{t_1\}.
\end{equation}
Now compare $S_1$ with $T$. The symmetric difference strictly decreases:
\begin{equation}
|S_1\triangle T|=|S\triangle T|-2,
\end{equation}
because we have removed one element from $S\setminus T$ and added one element from $T\setminus S$.
Iterating this procedure, after $m=|T\setminus S|$ steps we reach a set $S_m=T$.
At each step $j$, we choose $t_j\in T\setminus S_{j-1}$ (so $t_j$ is unoccupied in the current state)
and $s_j\in S_{j-1}\setminus T$ (so $s_j$ is occupied), which guarantees that each application
$\hat c^\dagger_{t_j}\hat c_{s_j}$ is nonzero by Lemma~1. Therefore,
\begin{equation}
\left(\prod_{j=1}^m \hat c^\dagger_{t_j}\hat c_{s_j}\right)|S\rangle=\pm |T\rangle.
\end{equation}
\end{proof}

\begin{theorem}[Generation of $\mathcal{H}_{N_e,N_o}$ by single-particle moves]
Fix any $|\phi_i\rangle\in\mathcal{H}_{N_e,N_o}$.
For integers $a,b$ with $a,a+b\in\{0,1,\dots,N_o-1\}$, define
\begin{equation}
\mathcal{C}_{a,b}=\hat c^\dagger_{a+b}\hat c_a.
\end{equation}
Then the linear span of the set
\begin{equation}
\left\{ \prod_{j=1}^{N_{\max}}\mathcal{C}_{a_j,b_j}|\phi_i\rangle \ \middle|\
N_{\max}\in\mathbb{Z},\ 0\le N_{\max}\le N_e\right\}
\end{equation}
equals $\mathcal{H}_{N_e,N_o}$.
\end{theorem}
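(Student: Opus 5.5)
The plan is to prove the two inclusions between the span in the statement and $\mathcal{H}_{N_e,N_o}$ separately. Since $\mathcal{H}_{N_e,N_o}$ is finite-dimensional, it is enough to show that every basis Slater determinant $|\phi_K\rangle$ lies in the span. Throughout I take $|\phi_i\rangle$ to be one of the Fock basis states $|S\rangle$ with $|S|=N_e$, as in the earlier definition. If $|\phi_i\rangle$ were a general superposition, the statement would need a separate argument, and I would flag that restriction explicitly.

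\emph{Inclusion of the span in $\mathcal{H}_{N_e,N_o}$.} Each $\mathcal{C}_{a,b}=\hat c^\dagger_{a+b}\hat c_a$ conserves particle number. It also maps the orbital range $\{0,\dots,N_o-1\}$ into itself, by the constraint on $a,b$. Hence every product $\prod_j \mathcal{C}_{a_j,b_j}|\phi_i\rangle$ lies in $\mathcal{H}_{N_e,N_o}$, and so does their linear span.

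\emph{Reverse inclusion.} Fix an arbitrary basis state $|T\rangle$ with $|T|=N_e$. I would invoke the Lemma on the constructive map between two occupation configurations, with $S$ the occupation set of $|\phi_i\rangle$. It supplies pairs $(s_j,t_j)$ with
\begin{equation}
\Big(\prod_{j=1}^{m}\hat c^\dagger_{t_j}\hat c_{s_j}\Big)|S\rangle=\pm|T\rangle,\qquad m=|T\setminus S|\le N_e .
\end{equation}
Each factor has the form $\mathcal{C}_{a_j,b_j}$ with $a_j=s_j$ and $b_j=t_j-s_j$, and both $a_j$ and $a_j+b_j$ lie in the allowed range. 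So $\pm|T\rangle$ is one of the generating vectors with $N_{\max}=m$, and multiplying by the sign puts $|T\rangle$ in the span.

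Two edge cases remain. The case $m=0$, i.e.\ $T=S$, is covered by $N_{\max}=0$, where the empty product is the identity. If one insists on $N_{\max}\ge 1$, as in the earlier informal version of the theorem, I would use the number operator $\mathcal{C}_{s,0}=\hat c^\dagger_s\hat c_s$ for some occupied $s\in S$, which returns $|S\rangle$ itself. Similarly, if one wants every product to have exactly a fixed length, extra diagonal factors $\mathcal{C}_{a,0}$ with $a$ occupied in the current state act as the identity and can be used for padding.

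Since every Fock basis vector lies in the span, linearity gives equality with $\mathcal{H}_{N_e,N_o}$. The only genuinely nontrivial ingredient is the earlier Lemma, namely that the intermediate states never vanish and the bound $m\le N_e$ holds; that is already established. I expect the main obstacle to be interpretive rather than technical: making clear that "$\cong$" in the earlier informal version means equality of spans, and handling the $N_{\max}=0$ case. The paper's later claim that the basis $\{\hat{\mathcal{L}}_{m,n}\}$ is equivalent would then follow by the triangular invertible change of operators described in the main text, but that is not needed here.
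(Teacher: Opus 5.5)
Your proof is correct and is essentially the paper's argument: apply the constructive-map lemma to the occupation set $S$ of $|\phi_i\rangle$ and an arbitrary target set $T$, identify each factor as $\mathcal{C}_{a_j,b_j}$ with $a_j=s_j$ and $b_j=t_j-s_j$, and conclude because the Fock basis spans $\mathcal{H}_{N_e,N_o}$. The paper's proof also tacitly takes $|\phi_i\rangle$ to be a Fock state, since it refers to its occupation set, so your explicit restriction matches the paper; your trivial inclusion and your handling of the $N_{\max}=0$ case only add bookkeeping to the same proof.
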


\begin{proof}
Let $|\phi_f\rangle\in\mathcal{H}_{N_e,N_o}$ be any Fock basis state with occupation set $T$,
and let the occupation set of $|\phi_i\rangle$ be $S$. By Lemma~2, there exist pairs
$(s_j,t_j)$ with $s_j\in S\setminus T$ and $t_j\in T\setminus S$ for $j=1,\dots,m$,
where $m=|T\setminus S|\le N_e$, such that
\begin{equation}
\left(\prod_{j=1}^m \hat c^\dagger_{t_j}\hat c_{s_j}\right)|\phi_i\rangle=\pm|\phi_f\rangle.
\end{equation}
Each factor $\hat c^\dagger_{t_j}\hat c_{s_j}$ is of the form $\mathcal{C}_{a_j,b_j}$ by choosing
\begin{equation}
a_j=s_j,\qquad b_j=t_j-s_j,
\end{equation}
and the orbital-validity conditions in the theorem ensure $a_j,a_j+b_j\in\{0,\dots,N_o-1\}$.
Therefore every Fock basis vector $|\phi_f\rangle$ is contained (up to a nonzero sign) in the set
generated from $|\phi_i\rangle$ by products of $\mathcal{C}_{a,b}$ of length at most $N_e$.
Since the Fock basis spans $\mathcal{H}_{N_e,N_o}$, the linear span of the generated set equals
$\mathcal{H}_{N_e,N_o}$.
\end{proof}

\begin{lemma}[Triangular change of basis on the disk]
\label{lem:triangular_basis}
Consider the disk (symmetric gauge) with $N_o=2S+1$ orbitals labelled by $j=0,1,\dots,2S$.
For $m,n\ge 0$, define
\begin{equation}
\hat{\mathcal L}_{m,n}:=\sum_{i=1}^{N_e}(\hat b_i^\dagger)^m\hat b_i^n
=\sum_{j=0}^{2S} A_{m,n}(j) \hat c^\dagger_{j+m-n}\hat c_j,
\end{equation}
where
\begin{equation}
A_{m,n}(j)=
\begin{cases}
\displaystyle
\sqrt{\frac{j!}{(j-n)!}}\sqrt{\frac{(j-n+m)!}{(j-n)!}}, & j\ge n,\\
0,& j<n,
\end{cases}
\end{equation}
with the implicit constraint $0\le j+m-n\le 2S$.

Fix an integer shift $i$ with $|i|\le 2S$, and define operator families
\begin{equation}
l_n^{(i)}:=\hat{\mathcal L}_{n+i,n},\qquad
c_\sigma^{(i)}:=\hat c^\dagger_{\sigma+i}\hat c_\sigma,
\end{equation}
with index sets
\begin{equation}
\Sigma_i:=\{\sigma\mid 0\le \sigma\le 2S,\ 0\le \sigma+i\le 2S\},\qquad
\mathcal N_i:=\{0,1,\dots,|\Sigma_i|-1\}.
\end{equation}
Then
\begin{equation}
l_n^{(i)}=\sum_{\sigma\in\Sigma_i} M^{(i)}_{n,\sigma} c_\sigma^{(i)},
\qquad
M^{(i)}_{n,\sigma}:=A_{n+i,n}(\sigma),
\end{equation}
where, upon ordering $\Sigma_i$ increasingly,
\begin{equation}
M^{(i)}_{n,\sigma}=0\quad \text{for } \sigma<n,
\qquad
M^{(i)}_{n,n}=\sqrt{n!}\sqrt{(n+i)!}\neq 0.
\end{equation}
Hence $M^{(i)}$ is lower triangular with nonzero diagonal and therefore invertible. In particular,
\begin{equation}
\mathrm{span}\{l_n^{(i)}\}_{n\in\mathcal N_i}
=
\mathrm{span}\{c_\sigma^{(i)}\}_{\sigma\in\Sigma_i}.
\end{equation}
\end{lemma}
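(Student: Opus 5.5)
The plan is a direct computation: first derive the matrix elements $A_{m,n}(j)$ from the single-particle action of the guiding-center ladder operators, then specialise to $m=n+i$ and read off the triangular structure.

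On the disk, in the symmetric gauge, the lowest-Landau-level orbitals $|j\rangle$ satisfy
\begin{equation}
\hat b|j\rangle=\sqrt{j}\,|j-1\rangle,\qquad \hat b^\dagger|j\rangle=\sqrt{j+1}\,|j+1\rangle .
\end{equation}
Iterating gives $\hat b^n|j\rangle=\sqrt{j!/(j-n)!}\,|j-n\rangle$ for $j\ge n$, and zero for $j<n$. Applying $(\hat b^\dagger)^m$ then gives the factor $\sqrt{(j-n+m)!/(j-n)!}$ together with the state $|j-n+m\rangle$.

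Next I would use the standard second-quantization identity for one-body operators,
\begin{equation}
\sum_i O_i=\sum_{j,j'}\langle j'|O|j\rangle\,\hat c^\dagger_{j'}\hat c_j ,
\end{equation}
with the orbital sums restricted to $0,\dots,2S$. This restriction is the finite-$N_o$ truncation, and it is exactly where the implicit constraint $0\le j+m-n\le 2S$ comes from. Together with the previous step, this yields the stated expansion with coefficients $A_{m,n}(j)$.

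Now set $m=n+i$ and relabel $j=\sigma$. The outgoing orbital is $\sigma+i$, so $l_n^{(i)}$ is a linear combination of the $c^{(i)}_\sigma$ with coefficients $M^{(i)}_{n,\sigma}=A_{n+i,n}(\sigma)$. The vanishing $M^{(i)}_{n,\sigma}=0$ for $\sigma<n$ is inherited directly from $A_{m,n}(j)=0$ for $j<n$. At $\sigma=n$ the general formula gives
\begin{equation}
M^{(i)}_{n,n}=\sqrt{n!}\,\sqrt{(n+i)!/0!},
\end{equation}
which is nonzero. A triangular matrix with nonzero diagonal has determinant equal to the product of its diagonal entries, so $M^{(i)}$ is invertible. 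Hence each $c^{(i)}_\sigma$ can be solved for in terms of the $l^{(i)}_n$ by back-substitution, and the two spans coincide.

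The step I expect to be the main obstacle is bookkeeping: making the row index set $\mathcal N_i$ and the column index set $\Sigma_i$ match so that $M^{(i)}$ is genuinely square and triangular.

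For $i\ge0$ this is clean. Here $\Sigma_i=\{0,\dots,2S-i\}=\mathcal N_i$, and the condition $n\ge0$ makes $m=n+i\ge0$ automatic.

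For $i<0$ it is not. Then $\Sigma_i=\{-i,\dots,2S\}$, while $l_n^{(i)}$ requires $n\ge -i$ for $m\ge0$, so the literal labelling $\mathcal N_i=\{0,\dots\}$ does not align with $\Sigma_i$. I would handle this case by taking adjoints rather than re-indexing by hand. Since
\begin{equation}
\bigl(\hat{\mathcal L}_{n+i,n}\bigr)^\dagger=\hat{\mathcal L}_{n,n+i},\qquad \bigl(c^{(i)}_\sigma\bigr)^\dagger=c^{(-i)}_{\sigma+i},
\end{equation}
the negative-shift statement is the Hermitian conjugate of the positive-shift one with $i\to -i$. Its coefficient matrix is therefore $M^{(-i)}$ up to a relabelling $n\mapsto n-i$, $\sigma\mapsto\sigma-i$, which preserves triangularity and the nonzero diagonal. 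I would state this relabelling explicitly in the proof rather than rely on the literal form of $\mathcal N_i$.
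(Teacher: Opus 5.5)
Your proof is correct and follows the paper's argument: compute the disk matrix elements of $(\hat b^\dagger)^{n+i}\hat b^n$, observe $M^{(i)}_{n,\sigma}=0$ for $\sigma<n$ with $M^{(i)}_{n,n}=\sqrt{n!\,(n+i)!}\neq 0$, and conclude invertibility and equality of spans. Your extra care for $i<0$, which the paper's one-line proof passes over, is genuinely needed. The statement's literal row set $\mathcal N_i=\{0,\dots,|\Sigma_i|-1\}$ contains $n<-i$, where $\hat{\mathcal L}_{n+i,n}$ and $(n+i)!$ are undefined, so indexing the rows by $n\in\Sigma_i$ (which is what your adjoint relabelling amounts to) is the correct repair.
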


\begin{proof}
The expansion follows directly from the disk matrix elements of $(\hat b^\dagger)^{n+i}\hat b^n$.
The vanishing for $\sigma<n$ and the explicit nonzero diagonal entry at $\sigma=n$ imply that
$M^{(i)}$ is lower triangular with nonzero determinant, hence invertible.
\end{proof}

\begin{theorem}[Higher-spin deformations span the finite Hilbert space]
\label{thm:concise_span}
Fix $|\phi\rangle\in\mathcal H_{N_e,N_o}$. Then
\begin{equation}
\mathrm{span}\left\{
\prod_{t=1}^{N_{\max}}\hat{\mathcal L}_{m_t,n_t}|\phi\rangle
\ \middle|\ 
0\le N_{\max}\le N_e,\ m_t,n_t\ge 0
\right\}
=
\mathcal H_{N_e,N_o}.
\end{equation}
Consequently, for any ground state $|\psi_0\rangle$, every neutral excitation
$|\psi\rangle\in\mathcal G^\perp$ can be written as a (finite) superposition of geometric
deformations of $|\psi_0\rangle$ generated by $\hat{\mathcal L}_{m,n}$.
\end{theorem}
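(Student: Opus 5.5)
The plan is to reduce the statement to the Fock-space result (the Theorem on generation of $\mathcal H_{N_e,N_o}$ by single-particle moves) using the triangular change of basis of Lemma~\ref{lem:triangular_basis}. The key point is linearity: a product of one-body operators depends multilinearly on its factors. By Lemma~\ref{lem:triangular_basis}, for each fixed shift $i$ the matrix $M^{(i)}$ is invertible. Hence every elementary move $\mathcal C_{a,b}=\hat c^\dagger_{a+b}\hat c_a$ (take shift $i=b$ and $\sigma=a\in\Sigma_b$) can be written as
\begin{equation}
\mathcal C_{a,b}=\sum_{n\in\mathcal N_b}\big(M^{(b)}\big)^{-1}_{a,n}\,\hat{\mathcal L}_{n+b,n}.
\end{equation}
Substituting this into $\prod_{j=1}^{N_{\max}}\mathcal C_{a_j,b_j}$ and expanding gives a finite linear combination of products $\prod_{t=1}^{N_{\max}}\hat{\mathcal L}_{m_t,n_t}$ of the \emph{same} length $N_{\max}\le N_e$. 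So the span of the $\mathcal C$-words of length $\le N_e$ applied to $|\phi\rangle$ is contained in the span of $\hat{\mathcal L}$-words of length $\le N_e$ applied to $|\phi\rangle$. The reverse inclusion into $\mathcal H_{N_e,N_o}$ is immediate, since every $\hat{\mathcal L}_{m,n}$ conserves particle number. The length-zero word accounts for $|\phi\rangle$ itself.

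If $|\phi\rangle$ is a Fock basis state, this together with the Fock-space generation Theorem already gives the claim. The main obstacle is the \emph{consequence} part: a ground state $|\psi_0\rangle$ is a generic superposition, not a Slater determinant, and the Fock-space theorem was proved only for a single determinant. Applying the same word of moves to a superposition mixes components, so that proof does not transfer directly.

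To handle a general $|\phi\rangle$, I would replace the combinatorial argument with an operator-algebra one. The first step is to show that the span $\mathcal A_{N_e}$ of products of at most $N_e$ one-body operators $\hat c^\dagger_a\hat c_b$ (including the identity) equals all of $\mathrm{End}(\mathcal H_{N_e,N_o})$. Every operator on the $N_e$-particle sector is a sum of normal-ordered $k$-body terms $\hat c^\dagger_{p_1}\cdots\hat c^\dagger_{p_k}\hat c_{q_k}\cdots\hat c_{q_1}$ with $k\le N_e$; in particular, the rank-one map $|T\rangle\langle S|$ is of this form. Each such term equals $\prod_{r=1}^k\hat c^\dagger_{p_r}\hat c_{q_r}$ up to sign, plus lower-body corrections coming from the anticommutators. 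Induction on $k$ then places every $k$-body term in $\mathcal A_{N_e}$. The second step is to take any nonzero $|\phi\rangle$ and pick a Fock state $|S\rangle$ with $\langle S|\phi\rangle\neq0$. The operator $|T\rangle\langle S|\in\mathcal A_{N_e}$ maps $|\phi\rangle$ to a nonzero multiple of $|T\rangle$, so $\mathcal A_{N_e}|\phi\rangle=\mathcal H_{N_e,N_o}$. The third step is to rewrite $\mathcal A_{N_e}$ in terms of $\hat{\mathcal L}$-words of length $\le N_e$ exactly as in the first paragraph.

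Applying this with $|\phi\rangle=|\psi_0\rangle$ shows that every vector of $\mathcal H_{N_e,N_o}$, and in particular every $|\psi\rangle\in\mathcal G^\perp$, is a finite superposition of the deformation states $\prod_t\hat{\mathcal L}_{m_t,n_t}|\psi_0\rangle$. I expect the induction on $k$ in the first step, with its sign and normal-ordering bookkeeping, to be the only step requiring real care. Everything else is linear algebra on top of the results stated above.
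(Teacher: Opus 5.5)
Your proposal is correct. Its first paragraph is exactly the paper's proof: invoke the single-particle-moves theorem, replace each $\mathcal C_{a,b}$ by its expansion in $\hat{\mathcal L}_{n+b,n}$ via Lemma~\ref{lem:triangular_basis}, and expand multilinearly, which preserves word length. The paper stops there, and you are right that this is not enough. The single-particle-moves theorem is stated for arbitrary $|\phi_i\rangle$, but its proof starts from ``the occupation set of $|\phi_i\rangle$'', so it only treats Slater determinants. The paper therefore never justifies the claim for a general seed, and in particular not the ``consequently'' part for a correlated $|\psi_0\rangle$. Your extra step closes exactly this gap while keeping the bound $N_{\max}\le N_e$: one-body words of length $\le N_e$ span $\mathrm{End}(\mathcal H_{N_e,N_o})$, so $|T\rangle\langle S|$ extracts any Fock state from any nonzero $|\phi\rangle$.

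You can skip the normal-ordering induction and stay inside the paper's combinatorial framework. Take $s_j,t_j$ from the paper's Lemma~2 and form the word $\prod_{u\in S\cap T}\hat c^\dagger_u\hat c_u\prod_{j}\hat c^\dagger_{t_j}\hat c_{s_j}$, which has length $|S|=N_e$. On a Fock state $|K\rangle$ it is nonzero only if $S\setminus T\subseteq K$ and $S\cap T\subseteq K$, i.e.\ only if $K=S$. Hence it equals $\pm|T\rangle\langle S|$ on $\mathcal H_{N_e,N_o}$, using only the paper's moves $\mathcal C_{a,b}$ together with the $b=0$ ones.

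One detail you inherit from the paper needs fixing. For a negative shift $b$, $\hat{\mathcal L}_{n+b,n}$ is undefined when $n<|b|$, so for $b<0$ use $\hat{\mathcal L}_{n,n+|b|}$ instead. This is just the Hermitian conjugate of the $|b|$ case, with the same invertible triangular matrix, since $\hat{\mathcal L}_{m,n}^\dagger=\hat{\mathcal L}_{n,m}$ and the coefficients $A_{m,n}(j)$ are real.
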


\begin{proof}
By the previously established result, finite products of
$\mathcal C_{a,b}=\hat c^\dagger_{a+b}\hat c_a$ generate $\mathcal H_{N_e,N_o}$ by linear span.
By Lemma~\ref{lem:triangular_basis}, for each fixed shift $b$ the operators
$\{\mathcal C_{a,b}\}$ and $\{\hat{\mathcal L}_{n+b,n}\}$ are related by an invertible linear map.
Replacing each $\mathcal C_{a,b}$ by its expansion in $\hat{\mathcal L}_{m,n}$ and expanding finite
products shows that the span generated by $\hat{\mathcal L}_{m,n}$ coincides with
$\mathcal H_{N_e,N_o}$. Restricting to $\mathcal G^\perp$ yields the statement for neutral excitations.
\end{proof}

For an infinite-dimensional Hilbert space, it requires a more formal construction using the Gelfand–Naimark–Segal (GNS) formalism, which we leave for future work to discuss about.


\end{document}